\newtheorem{thm}{Theorem}[section]
\newtheorem{cor}[thm]{Corollary}
\newtheorem{lem}[thm]{Lemma}
\newtheorem{obs}[thm]{Observation}
\newtheorem{mdef}[thm]{Definition}
\newcommand{\argmax}{\textrm{argmax}}
\newcommand{\valopt}[1]{v_{#1,j^*(#1)}}
\newcommand{\jopt}[1]{j^*(#1)}
\newcommand{\state}[3]{\ensuremath{<#1,#2,#3>}}
\newcommand{\dset}[2]{\ensuremath{D(#1,#2)}}
\newcommand{\opt}{\ensuremath{\textsc{Opt}}}
\newcommand{\E}{\ensuremath{\mathbb{E}}}
\newcommand{\opti}[1]{\ensuremath{J^{*}_{#1}}}
\newcommand{\V}{\ensuremath{\mathcal{V}}}
\newcommand{\vp}{\ensuremath{{\bf v}}}
\newcommand{\set}[1]{\left\{ #1 \right\} } 
\newcommand{\reals}{\mathbb{R}} 
\newcommand{\xos} {\ensuremath{\textsc{XOS}}}
\newcommand{\Di}{\mathcal{D}_i} 
\newcommand{\ceil}[1]{\ensuremath{\left \lceil{#1}\right \rceil }}
\newcommand{\hide}[1]{}
\def\squareforqed{\hbox{\rule{2.5mm}{2.5mm}}}
\def\QED{\ifmmode\squareforqed 
  \else{\nobreak\hfil   
    \penalty50                 
    \hskip1em                  
    \null                      
    \nobreak                   
    \hfil                      
    \squareforqed              
    \parfillskip=0pt           
    \finalhyphendemerits=0     
    \endgraf}                  
  \fi}
\def\blksquare{\rule{2mm}{2mm}}
\def\qedsymbol{\blksquare}
\newcommand{\bg}[1]{\medskip\noindent{\bf #1}}
\newcommand{\ed}{{\hfill\qedsymbol}\medskip}
\newenvironment{proofof}[1]{{\it{Proof of #1 : }}}{\ed}
\begin{document}
\title{Draft Auctions}
\author{
Nikhil R. Devanur\thanks{Microsoft Research, {\tt nikdev@microsoft.com}} 
\and 
Jamie Morgenstern\thanks{Carnegie Mellon University,  {\tt jamiemmt@cs.cmu.edu}}
\and
Vasilis Syrgkanis\thanks{Cornell University, {\tt vasilis@cs.cornell.edu}, Supported in part by Simons Graduate Fellowship in Theoretical Computer Science. Part of work done while an intern at Microsoft Research, New England.}}

\maketitle
\begin{abstract}

We introduce draft auctions, which is a sequential auction format where at each iteration players bid for the right to buy items at a  fixed price.  We show that draft auctions offer an exponential improvement in social welfare at equilibrium over sequential item auctions where predetermined items are auctioned at each time step. Specifically, we show that for any subadditive valuation the social welfare at equilibrium is an $O(\log^2(m))$-approximation to the optimal social welfare, where $m$ is the number of items.  We also provide tighter approximation results for several subclasses. Our welfare guarantees hold for Bayes-Nash equilibria and for no-regret learning outcomes, via the smooth-mechanism framework.  Of independent interest, our techniques show that in a combinatorial auction setting, efficiency guarantees of a mechanism via smoothness for a very restricted class of {\em cardinality} valuations, extend with a small degradation, to subadditive valuations, the largest complement-free class of valuations.  Variants of draft auctions have been used in practice and have been experimentally shown to outperform other auctions. Our results provide a theoretical justification.
\end{abstract}
\thispagestyle{empty}
\newpage
\setcounter{page}{1}
\section{Introduction}\label{sec:intro}

Consider the scenario where several indivisible items are to be
auctioned off to bidders with {\em combinatorial} valuations, i.e.,
valuations that depend on the entire set of items obtained.  In
practice, simple auctions such as {\em sequential item auctions} are
commonly used for such purposes.  It is known that these auctions
could lead to a highly inefficient allocation of items, as measured by
the {\em price of anarchy},
even for very simple valuation functions and a {\em complete
  information} setting.  We introduce a natural and simple alternative
called a {\em draft auction} which has a much (exponentially) better
price of anarchy for the very general class of subadditive valuation
functions, for the {\em incomplete information} setting.  We discuss
why this makes a strong case for adopting draft auctions in place of
sequential item auctions in practice.

An {\em instance} of the scenario we wish to study consists of $m$
{\em items} that are to be auctioned off, $n$ {\em bidders} wishing to
obtain these items, and a {\em valuation} function $v_i:2^{[m]}
\rightarrow \reals_+$ for each bidder $i$.  (We identify the set of
items and the set of bidders with $[m]$ and $[n]$ respectively.)  We
assume that the $v_i$s are monotone and non-decreasing.  The {\em result}
of an auction is an allocation of items to bidders and payments of
bidders: bidder $i$ gets a set $S_i \subseteq [m]$ of items, and makes
a payment $P_i$, with the $S_i$s forming a partition of $[m]$.
Bidders are selfish and try to maximize their utility from the
auction, which is assumed to be {\em quasi-linear}, i.e., $v_i(S_i) -
P_i$.  The valuation $v_i(S)$ can then be interpreted as how much the
set of items $S$ is worth to $i$, in terms of the {\em numeraire} in
which the payments are made.  Suppose that the objective of the
auction designer is to maximize the {\em social welfare} of the
resulting allocation, which is defined as $SW := \sum_{i \in [n]}
v_i(S_i)$.

An example of such an auction that is commonly seen in practice is
what is called a {\bf sequential item auction}: items are auctioned
off one after the other (in some arbitrary order), using a simple
auction such as an ascending price auction or a sealed bid first or
second price auction.  To be precise, consider a sequential,
sealed-bid first price auction which is formally defined as follows.
There are $m$ rounds, and in each round $j \in [m]$ each bidder
$i\in[n]$ submits a bid $b_{ij}.$ Item $j$ is sold to the highest
bidder $i^* = \arg \max_{i \in [n]} \{ b_{ij} \} $, at the price equal
to her bid, $b_{i^*j}$, breaking ties arbitrarily. The winner's
identity $i^*$ and the winning bid $b_{i^*j}$ are publicly revealed
before proceeding to the next round.

Notice that the allocation of items in this auction is a function of
the bids, and each bidder strategizes to maximize her own utility.
The bid of a bidder in any round could be a function of her own
valuation, the information the bidder has about other bidders'
valuations, and the observed history until that time, which includes
the winners and their bids in all previous rounds.  In general there
is no single utility-maximizing strategy for a bidder since her
utility also depends on other bidders' strategies, thus setting up a
{\em game} among the bidders.

Rational players are assumed to play {\em equilibrium} strategies,
where each bidder's strategy is a ``best response'' to the strategies
of all the other bidders. There are many equilibrium definitions in
the same spirit as above but differing in technical details; see
Section \ref{sec:prelims} for precise definitions.

\paragraph {Bounding the inefficiency at equilibrium via the price
  of anarchy.}
Equilibria of certain auctions lead to allocations that are not
welfare optimal. It is standard practice to analyze this inefficiency
by bounding the ratio of welfares of the optimal allocation and the
welfare-minimizing equilibrium of the auction. Such a bound is called
the {\bf Price of Anarchy} ($PoA$). The { Price of Anarchy} provides a
quantitative scale with which we can measure such auctions;\footnote{Analogous to an approximation factor for approximation algorithms or a competitive ratio for online algorithms.} a smaller
price of anarchy is more desirable. To be precise, for a given
valuation profile $\vp$, let $SW(\opt(\vp))$ be the optimal social
welfare, which is the highest social welfare obtainable over all
possible allocations of items to bidders.  $SW(\opt(\vp)) := \max
\set{ \sum_{i \in [n]} v_i(S_i): (S_i)_{i\in [n]} \text{ is a
    partition of } [m] }$. Let $T$ denote a particular set of
equilibria, $s$ an equilibrium in $T$ and $SW(s)$ the social welfare
at this equilibrium. Then
\[ PoA(T) := \max_{s\in T}\frac{SW(\opt(\vp))}{SW(s)}.\]

The price of anarchy defined above is for a given instance; it can be
generalized to a {\em Bayesian} setting, which formalizes the notion
that bidders have probabilistic beliefs about each others
valuations:\footnote{What we call the {\em Bayesian} setting here is
  also called the {\em incomplete information} setting.}  each $v_i$
is drawn independently from a probability distribution $\Di$ for all
$i\in [n]$.  The $\Di$s are public knowledge, but $v_i$ is bidder
$i$'s private information.  $\Di$ represents the belief about bidder
$i$'s valuation based on publicly available information.  The price of
anarchy is then defined as a ratio of expectations, expectation of
$SW(\opt)$ and expectation of $SW(s)$.  The expectations are taken
over the draws of $v_i$ from $\Di$ for each $i$.  The {\em complete
  information} setting where all bidders know all valuations is a
special case of the Bayesian setting.

The price of anarchy of an auction can crucially depend on the
structure of the valuation functions; therefore, we consider special
classes of valuation functions and study the worst-case (maximum)
price of anarchy over all instances with valuation functions belonging
to each class. Among the simplest valuation function classes are {\em
  additive} valuations, which are of the form $v_i(S_i) = \sum_{j \in
  S_i} v_{ij} $ and {\em unit-demand} valuations which are of the form
$v_i(S_i) = \max_{j \in S_i} \{v_{ij}\} .$ That is, a unit-demand
bidder values a bundle only according to his most-valued item in the
bundle.

It was recently shown by \cite{Feldman2013b} that for sequential first
price auctions, when bidders may have {\em either} additive or
unit-demand valuations, {\bf the price of anarchy could be $\Omega(m)$
} for the set of pure Nash equilibria in the complete information
setting.\footnote{See Section~\ref{sec:prelims} for formal definitions
  of equilibria and the complete information setting.}  Since the
class of additive/unit-demand valuations are among the simplest
valuations and the set of pure Nash equilibria in the complete
information setting is among the smallest set of equilibria, the price
of anarchy for this case should be among the lowest.  Yet the lower
bound of $\Omega(m)$ is nearly as bad as it gets since it is easy to
show an upper bound of $O(m)$ for a much more general class of
valuations (subadditive valuations) and a much bigger set of
equilibria.


\subsection*{Our Contributions}
We propose a natural and simple variant of the sequential item auction
which we call a {\bf draft auction}.  Draft auctions also proceed in
rounds: each round is a sealed-bid first price auction. The difference
is that there is no designated item in any round; instead, the winner
decides which items she wishes to purchase in that round, paying her
bid for {\em each} such item. Formally, a draft auction is as follows.
\begin{enumerate}
\setlength{\itemsep}{-3pt}
\item Initialize, for all $i\in [n], ~S_i = \emptyset, P_i = 0.$ The set of remaining items $I=[m]$. 
\item While $I \neq \emptyset$,
\item \hspace{1cm} Each bidder $i\in[n]$ submits a {sealed} bid $b_{i}$ and a set $X_i \subseteq I.$
\item \hspace{1cm} Allocate set $X_{i^*} $ to $i^* = \arg \max_{i \in [n]} \{ b_{i} \} $, i.e., $S_{i^*} = S_{i^*} \cup X_{i^*} $.  Break ties arbitrarily.
\item \hspace{1cm}  Bidder $i^*$ pays her bid for each item in $X_{i^*} $, i.e., $P_{i^*} = P_{i^*} + b_{i^*} |X_{i^*} |$. 
\item {\hspace{1cm}  The winner $i^*$, winning bid $b_{i^*}$ and allocated bundle $X_{i^*}$ is announced. }
\item End While. 
\end{enumerate} 

We show that draft auctions have a much better price of anarchy than
sequential item auctions, for the very general class of subadditive valuation
functions.  {\bf Subadditive} valuations are those $v$ that satisfy
the property $v(S \cup T) \leq v(S) + v(T)$ for all $S, T\subseteq
[m]$.  The class of subadditive valuations, which are also called
complement-free valuations, contains other
well-studied classes of valuations such as submodular, gross
substitutes (see Appendix \ref{app:valuations} for formal
definitions), additive and unit-demand valuations.  We show the
following price of anarchy bound for draft auctions for subadditive
valuations.
\begin{thm}\label{thm:subadditive}
  The price of anarchy for draft auctions for subadditive valuations
  with respect to Nash equilibria (Definition \ref{def:nash}) in the
  Bayesian setting or correlated equilibria (Definition
  \ref{def:correlated}) in the complete information setting is
  $O(\log^2 m)$.
\end{thm}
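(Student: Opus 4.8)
The plan is to bound the price of anarchy through the smooth-mechanism framework: I would exhibit a randomized deviation for each bidder that, against any behavior of the others, recovers a $\lambda$-fraction of that bidder's optimal value while incurring payments controlled by the total revenue, i.e. establish that the draft auction is $(\lambda,\mu)$-smooth with $\lambda = \Omega(1/\log^2 m)$ and $\mu = O(1)$. Since smoothness is known to transfer to Nash equilibria in the Bayesian setting and to correlated equilibria in the complete-information setting (losing only a factor $\max(1,\mu)/\lambda$), this immediately yields the claimed bound. The two logarithmic factors enter in two conceptually separate places, so I would structure the argument as a base case for \emph{cardinality} valuations followed by an extension to subadditive valuations.

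For the base case, fix a bidder $i$ with a cardinality valuation (every item worth a common value, up to $\optall$ items) whose optimal bundle has size $\optall$. Holding the others' strategies fixed, each candidate item carries a threshold price equal to the bid needed to capture it; sort these thresholds in decreasing order as $p'_1 \ge p'_2 \ge \cdots$. The deviation I would use is the ``grab the cheap half'' strategy: bid exactly the median threshold $\poptmid$ and request the $\optmid$ items whose thresholds lie below it. This guarantees $\optmid$ items, hence half the optimal value, at a total payment of at most $\optmid \cdot \poptmid$, while the $\optmid$ items above the median contribute revenue at least $\optmid \cdot \poptmid$ to the auction. Charging this payment against the revenue term yields $(\lambda,\mu)$-smoothness with constant $\lambda,\mu$, so cardinality valuations already give an $O(1)$ price of anarchy; the fact that one round of a draft lets the winner seize an entire bundle is what makes this single deviation implementable.

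To extend to a subadditive bidder $i$ with optimal bundle $\sjopt{i}$, I would first replace $v_i$ by an additive lower bound: subadditive valuations admit \emph{supporting prices} $(p_j)_{j \in \sjopt{i}}$ with $\sum_{j \in \sjopt{i}} p_j \ge v_i(\sjopt{i})/O(\log m)$ and $v_i(W) \ge \sum_{j \in W} p_j$ for every $W \subseteq \sjopt{i}$; the first log factor is lost here. The per-subset lower bound is the crucial ingredient, since it certifies that any won sub-bundle is genuinely valuable. I would then bucket the items of $\sjopt{i}$ by price scale, $B_\ell = \set{j : p_j \in [2^\ell, 2^{\ell+1})}$, discarding negligibly cheap items; after this truncation only $O(\log m)$ buckets carry a constant fraction of $\sum_j p_j$. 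Within a single bucket all supporting prices agree up to a factor of two, so the items behave exactly like a cardinality valuation and the base-case deviation applies. The subadditive bidder's deviation is therefore: draw a bucket $\ell$ uniformly at random and run the cardinality cheap-half strategy targeting $B_\ell$. Randomizing over the $O(\log m)$ buckets loses the second log factor, and the per-subset bound turns the roughly $|B_\ell|/2$ captured items into value $\ge |B_\ell| 2^\ell/2 = \Omega(v_i(B_\ell))$.

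Combining the two steps gives $\lambda = \Omega(1/\log^2 m)$ and $\mu = O(1)$, hence a price of anarchy of $O(\log^2 m)$. The main obstacle I anticipate is the extension rather than the base case: establishing supporting prices for subadditive valuations with both the $1/O(\log m)$ recovery and the subset-domination property, and, more delicate in the draft setting, verifying that ``target bucket $B_\ell$ with the cheap-half strategy'' is a well-defined strategy in the sequential, incomplete-information game whose guaranteed utility composes additively across the smoothness inequality. In particular the thresholds $p'_j$ are induced by the random realizations of the other bidders' strategies, so I would need to argue the deviation's payoff bound in expectation over those realizations and check that it remains valid as the best-response benchmark required by the correlated-equilibrium and Bayes--Nash price-of-anarchy guarantees invoked above.
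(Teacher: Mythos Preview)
Your high-level architecture matches the paper: establish smoothness for a simple ``cardinality'' (what the paper calls \emph{constraint-homogeneous}) class, then lose one $\log m$ going to additive via bucketing and a second $\log m$ going from additive to subadditive via supporting prices. Those two extension steps are essentially the paper's Extension Lemma together with Lemma~\ref{lem:additive} and the known $H_m$-approximation of subadditive by additive, so that part is fine.

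The genuine gap is in the base case. Your ``grab the cheap half'' deviation bids the median threshold $\poptmid$, but that median is computed from the prices induced by $b_{-i}$. The smoothness definition (Definition~\ref{def:smooth-mechanism}) requires a mapping $b_i':B_i\to B_i$ depending only on the deviator's own original strategy and valuation; any dependence on $b_{-i}$ breaks the extension to Bayes--Nash and correlated equilibria, which is exactly what you flag as an ``obstacle'' but do not resolve. Moreover, even as a sequential bid plan the proposal is underspecified: in a draft auction you submit a bid in every round, and if your bid of $\poptmid$ loses an early round while the original $b_{it}$ would have won, the observed history diverges and the remaining ``cheap'' items need no longer be available at those prices. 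Conversely, bidding high enough to win round~1 outright may force you to pay far more than $\poptmid$ per item.

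The paper's fix is precisely engineered around these two issues. The Core Deviation (Definition~\ref{def:coredev}) bids $\max\{b_{it},\hat v/2\}$ at every round: this depends only on $v_i$ and $b_i$. When the original bid $b_{it}$ wins, the deviator replays her original purchase exactly, so the public history is unchanged and the prices she faces in later rounds are the prices under $b$. The first time she wins because of the $\hat v/2$ floor, she buys all remaining needed units in that single round and exits. Lemmas~\ref{lem:cd1}--\ref{lem:cd3} then show that either she secures $\lceil |S|/2\rceil$ units at per-unit cost $\hat v/2$, or the $\lceil |S|/2\rceil$-th lowest equilibrium price among her target items already exceeds $\hat v/2$, which is the charging argument you wanted. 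The idea you are missing is this history-preserving ``shadow your own strategy until the floor bites'' construction; once you have it, your bucketing and supporting-price layers go through.
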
 
We show a slightly better bound for the class
of $\xos$ valuations, which is the class of valuations that are representable
as a maximum of linear functions, i.e., valuations of the form
\[ v(S) = \max\set{\sum_{j\in S}v_{1j}, \ldots, \sum_{j\in
  S}v_{kj}}.\] 
\begin{thm}\label{thm:xos}
  The price of anarchy for draft auctions for $\xos$ valuations with
  respect to Nash equilibria in the Bayesian setting or correlated
  equilibria in the complete information setting is $O(\log m)$.
\end{thm}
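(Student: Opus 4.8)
The plan is to invoke the smooth-mechanism framework: it suffices to prove that the draft auction is $(\lambda,\mu)$-smooth for \xos{} valuations with $\lambda=\Omega(1/\log m)$ and $\mu=O(1)$, since the framework then yields a price-of-anarchy bound of $O(\mu/\lambda)=O(\log m)$ that applies simultaneously to correlated equilibria in the complete-information setting and to Nash equilibria in the Bayesian setting. Concretely, for every valuation profile $\vp$ and every bid profile $\bidp$ I must exhibit, for each bidder $i$, a (possibly randomized) deviation $b_i^*$ depending only on $v_i$ and the realized bids such that $\sum_i \E[u_i(b_i^*,b_{-i})] \ge \lambda\,SW(\opt(\vp)) - \mu\cdot\mathrm{Revenue}(\bidp)$.

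The deviation exploits the defining feature of \xos{}: for each bidder $i$, the optimal bundle $\sjopt{i}$ admits a supporting additive function $a^i$ with $\svalopt{i}=\sum_{j\in\sjopt{i}}a^i_j$ and $v_i(T)\ge\sum_{j\in T\cap \sjopt{i}}a^i_j$ for all $T$. First I would discard the items of $\sjopt{i}$ whose supporting value $a^i_j$ is below a $1/\mathrm{poly}(m)$ fraction of $M_i:=\max_j a^i_j$; since there are at most $m$ of them they contribute a negligible fraction of $\svalopt{i}$. I then partition the remaining items into $K=O(\log m)$ dyadic buckets $B_k=\{\,j\in\sjopt{i}: a^i_j\in[2^{-k}M_i,2^{-k+1}M_i)\,\}$, and by averaging pick a bucket $B_{k^*}$ carrying a $1/K=\Omega(1/\log m)$ fraction of $\svalopt{i}$.

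The key structural observation — and the reason the draft format is the right mechanism — is that within a single bucket all items have essentially the same supporting value $p_{k^*}:=2^{-k^*}M_i$, so $B_{k^*}$ behaves like a cardinality valuation (worth depending essentially only on the number of items acquired), and the draft auction charges a uniform per-item price equal to the winner's bid, exactly matched to this structure. I would therefore first establish smoothness for cardinality valuations, where bidder $i$'s deviation is simply to bid $p_{k^*}/2$ and request the still-available items of the target bucket: in the round in which such an item is allocated, either $i$ wins it, gaining net utility at least $p_{k^*}/2$ (her per-item value is at least $p_{k^*}$ by the supporting property, while she pays only $p_{k^*}/2$), or some other bidder wins that round with a strictly higher bid, so the item is sold at a per-item price exceeding $p_{k^*}/2$ and is charged to the revenue term. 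Summing this per-item accounting over $B_{k^*}$ gives $\E[u_i]+\mathrm{Revenue\ from\ }B_{k^*}\ge \tfrac14\sum_{j\in B_{k^*}}a^i_j\ge \tfrac{1}{4K}\svalopt{i}$, i.e. the cardinality bound with constants $(\lambda_0,\mu_0)=(\Theta(1),1)$, and the single averaging step over the $O(\log m)$ buckets of an arbitrary \xos{} valuation costs exactly one $\log m$ factor in $\lambda$, yielding $O(\log m)$.

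The main obstacle I anticipate is the sequential, history-dependent nature of the draft auction: once bidder $i$ switches to the threshold-bidding deviation, the subsequent winners, winning bids, and remaining-item set all change, so the counterfactual per-item prices at which $i$'s targeted items are sold need not coincide with the prices realized in the actual run of $\bidp$. Reconciling these counterfactual prices with the true $\mathrm{Revenue}(\bidp)$ — so that $\mu$ multiplies the genuine revenue rather than an uncontrolled counterfactual quantity — is the crux, and I expect it to require a careful one-round-deviation/monotonicity argument that charges each item lost by $i$ to a bid that is actually realized, together with the fact that in a first-price format a winner's payment equals her bid.
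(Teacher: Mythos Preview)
Your high-level architecture matches the paper's: use the smooth-mechanism framework, reduce \xos{} to additive via the supporting linear function, then reduce additive to a ``cardinality'' (the paper calls it \emph{constraint-homogeneous}) valuation by dyadic bucketing at a cost of one $\log m$ factor. That part is fine and essentially identical to Lemma~\ref{lem:additive} together with the Extension Lemma.

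The genuine gap is exactly the one you flag in your final paragraph but do not close. Your proposed deviation ``bid $p_{k^*}/2$ and request the remaining bucket items'' does \emph{not} preserve the observable history: if under $b$ bidder $i$ was winning some round $t$ with $b_{it}>p_{k^*}/2$, your deviation makes her lose that round, the winner and winning set change, and every subsequent bid of every other player (which is a function of history) is now uncontrolled. Consequently, the prices at which your target items are sold under the deviation bear no relation to the prices under $b$, and you cannot charge them against $\sum_i P_i(b)$. A per-item or ``monotonicity'' accounting does not rescue this, because the smoothness inequality must hold against the \emph{realized} revenue of $b$, not a counterfactual one.

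The paper's resolution is a specific device you are missing: the Core Deviation (Definition~\ref{def:coredev}) bids $\max\{b_{it},\,b_i^*\}$ with $b_i^*=\hat v/2$, and crucially, whenever $i$ wins with her \emph{original} bid $b_{it}$ she buys exactly the set she was buying under $b$. Thus, until the first round she wins at the threshold $b_i^*$, the history seen by all other players is identical to that under $b$; the moment she wins at $b_i^*$ she buys in one shot all units needed to reach $\lceil|S|/2\rceil$ and drops out. The cost of mimicking $b_i$ in the interim is absorbed by the extra $-P_i(b)$ term in the Core Deviation Lemma, and the case analysis (Lemmas~\ref{lem:cd1}--\ref{lem:cd3}) then compares against the true prices $p_j(b)$. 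Without this ``shadow the original bid until you can grab everything at once'' idea, your argument does not go through.
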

The relations between these classes of valuations are given below. 
\[ \text{unit-demand } \cup \text{ additive } \subseteq \text{ gross
  substitutes } \subseteq \text{ submodular } \subseteq \xos \subseteq
\text{ subadditive} .\]

We also show constant factor upper and lower bounds for the price of
anarchy for unit-demand valuations as well as for symmetric concave
valuations (where the valuation is a concave function of only the {\em
  number} of items; see Section \ref{sec:extension} for a precise
definition).
\begin{thm}\label{thm:unit-demand}
  The price of anarchy for draft auctions for unit demand bidders with
  respect to Nash equilibria in the Bayesian setting or correlated
  equilibria in the complete information setting is at most 4, and
  w.r.t. pure Nash equilibria in the complete information setting is
  at most 2.
\end{thm}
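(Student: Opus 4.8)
The plan is to run a deviation (smoothness-type) argument tailored to the fact that a unit-demand bidder cares about a single item. Fix an equilibrium with allocation $(S_i)$ and payments $(P_i)$, and let $\opt$ match each bidder $i$ to a distinct item $\jopt{i}$ of value $\valopt{i}$, so that $SW(\opt(\vp)) = \sum_i \valopt{i}$. Since the auction runs until every item is sold, I would define $p_j$ to be the winning bid in the round in which item $j$ is allocated; then the per-item payments sum to the revenue, $\sum_j p_j = \sum_i P_i =: \mathrm{Rev}$. I would treat pure Nash equilibria in the complete-information setting first. For bidder $i$, consider the single deviation: play exactly as in equilibrium until the round $t^*$ in which $\jopt{i}$ is sold, at that round outbid the winner by $\epsilon$ while requesting the singleton $\{\jopt{i}\}$, and bid $0$ thereafter.

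Because the play is identical to the equilibrium run up to round $t^*$, the other bidders submit the same bids at $t^*$, so bidder $i$ wins $\{\jopt{i}\}$ at price $p_{\jopt{i}}+\epsilon$, and the only payment incurred beyond the pre-$t^*$ equilibrium rounds is this amount. Being unit-demand, $i$'s value is at least $\valopt{i}$, so comparing the deviation's utility with the equilibrium utility and letting $\epsilon\to 0$ gives $v_i(S_i) + p_{\jopt{i}} \ge \valopt{i}$. Summing over $i$ and using that $\jopt{\cdot}$ is injective (hence $\sum_i p_{\jopt{i}} \le \mathrm{Rev}$) yields $SW + \mathrm{Rev} \ge SW(\opt(\vp))$. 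Individual rationality gives $\mathrm{Rev} \le SW$, so $2\,SW \ge SW(\opt(\vp))$ and the price of anarchy is at most $2$.

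For the larger equilibrium classes the exact clearing price at $t^*$ cannot be targeted, since a valid deviation must be committed to without observing the realized competing bids; I would instead use the prior-independent deviation that persistently bids $\tfrac12\valopt{i}$ on $\{\jopt{i}\}$ until that item is sold. This splits into two cases: if $i$ wins $\jopt{i}$ its utility is at least $\tfrac12\valopt{i}$, and otherwise $\jopt{i}$ sells at a per-item price of at least $\tfrac12\valopt{i}$, so in both cases $u_i^{\mathrm{dev}} + p_{\jopt{i}} \ge \tfrac12\valopt{i}$. Summing this smoothness inequality, invoking the framework's extension to coarse-correlated equilibria (complete information) and to Bayes--Nash equilibria (Bayesian setting) to control the counterfactual prices by the equilibrium revenue, and again using injectivity of $\jopt{\cdot}$ together with $\mathrm{Rev}\le SW$, would give $SW \ge \tfrac14 SW(\opt(\vp))$, i.e. a price of anarchy of at most $4$.

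The main obstacle is the sequential, endogenous nature of the draft auction: when a bidder deviates, the identities of the winners, and hence the set of items sold in every subsequent round, can change, so the per-item price that $\jopt{i}$ fetches under the deviation is a counterfactual quantity that is not literally a term of the equilibrium revenue. For pure Nash this is circumvented because a one-shot deviation at round $t^*$ leaves the earlier history---and therefore the competing bids at $t^*$---untouched, pinning the relevant price to the genuine equilibrium price $p_{\jopt{i}}$; this is exactly why the pure-Nash bound is the sharp one. For correlated and Bayes--Nash equilibria one must instead bound these counterfactual prices in expectation against the equilibrium revenue, which is what costs the extra factor of two. A secondary point I would verify is that a unit-demand winner gains nothing by hoarding items, so the single-item view is without loss, and that, because the item order is selected by the winners rather than fixed in advance, off-the-shelf sequential-composition results for a fixed order of single-item first-price auctions do not apply directly.
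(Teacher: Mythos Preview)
Your pure Nash argument is fine and matches the paper's in spirit: play on-path until the round where $\jopt{i}$ is about to be taken, outbid, and take the single item. Since nothing before that round changes, the winning price you face is literally the equilibrium price $p_{\jopt{i}}(b)$, and the $SW+\mathrm{Rev}\ge SW(\opt)$ bookkeeping goes through.

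The gap is in your Bayes--Nash/correlated argument. Your deviation ``always bid $\tfrac12\valopt{i}$ on $\{\jopt{i}\}$'' can \emph{lower} player $i$'s bid in rounds where, under $b$, she was winning with $b_{it}>\tfrac12\valopt{i}$. The first time this happens she may lose a round she previously won; the winner, the announced price, and the purchased set all change, and from then on the other players' history-contingent strategies put the game on a different path. Consequently, the fact that ``$\jopt{i}$ sells at price at least $\tfrac12\valopt{i}$'' in the \emph{deviation} run says nothing about the equilibrium price $p_{\jopt{i}}(b)$, and your inequality $u_i^{\mathrm{dev}}+p_{\jopt{i}}\ge\tfrac12\valopt{i}$ involves a counterfactual $p_{\jopt{i}}$ that belongs to a different run for each $i$. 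These cannot be summed into anything comparable to $\sum_i P_i(b)$. The smoothness framework does \emph{not} ``control counterfactual prices by the equilibrium revenue'' for you; Definition~\ref{def:smooth-mechanism} requires the pointwise inequality to already be stated against $\sum_i P_i(b)$, and the extension theorems only take expectations and plug in the equilibrium best-response condition.

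The missing idea is exactly the one the paper uses: make the deviation bid $\max\{b_{it},\tfrac12\valopt{i}\}$ in each round (and if you ever win, take $\jopt{i}$ and stop). This never lowers your bid, so in any round you lose, the announced winner, winning bid, and purchased set are identical to the run under $b$; hence if you never win, the round and price at which $\jopt{i}$ is sold are the \emph{original} ones, giving $p_{\jopt{i}}(b)\ge\tfrac12\valopt{i}$. If you do win, it is either with $b_{it}$ (so the payment is at most $P_i(b)$) or with $\tfrac12\valopt{i}$. This yields $u_i(b_i'(b_i),b_{-i})\ge \tfrac12\valopt{i}-p_{\jopt{i}}(b)-P_i(b)$ pointwise in $b$, which is the $(\tfrac12,2)$-smoothness the framework needs for the factor-$4$ bound. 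Note that this deviation is a mapping $b_i'\colon B_i\to B_i$ that genuinely uses $b_i$; a constant deviation independent of $b_i$, as you propose, cannot avoid disturbing the history in general.
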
 

\begin{thm}\label{thm:lowerbounds}
  The price of anarchy for draft auctions for unit demand bidders
  w.r.t. pure Nash equilibria in the complete information setting is
  at least 1.22. Further there are instances where no equilibrium
  achieves a welfare within $1+\epsilon$ of the optimum, for some
  small universal constant $\epsilon>0$.
\end{thm}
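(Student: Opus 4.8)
The plan is to prove both statements by exhibiting explicit complete-information instances with unit-demand bidders and reasoning directly about the pure Nash equilibria of the induced extensive-form game. The first thing I would record is a structural simplification: since a unit-demand bidder's value for a won bundle is $\max_{j\in X_{i^*}} v_{i^*j}$ while her payment is $b_{i^*}|X_{i^*}|$, picking more than one item never raises the value and strictly hurts whenever the extra items carry positive price, so in any best response a winning bidder claims a single item --- her most valuable remaining one. This reduces a draft auction on unit-demand valuations to a sequence of single-item first-price rounds in which the round winner takes one item, and it lets me reason about continuation play by backward induction: the final round, with one surviving item, is an ordinary complete-information first-price auction whose pure equilibria have the highest-value survivor winning at a price essentially equal to the second-highest remaining value.

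For the $1.22$ lower bound I would fix a small instance --- two or three items and a matching number of bidders --- with a value matrix $(v_{ij})$ left as free parameters, and construct a specific pure Nash equilibrium in which the first-round bidding is driven by a ``high'' bidder who preempts an item that the optimal allocation would have assigned elsewhere. Concretely, I would write down each bidder's bid and pick as a function of the observed history, compute the continuation utilities via the backward-induction argument above, and then verify the no-profitable-deviation inequalities round by round. The equilibrium welfare and the optimum are both explicit functions of the parameters, so I would then maximize their ratio over the feasible parameter region; the optimum of that program yields the constant $1.22$.

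The harder, second part asserts that in some instance \emph{no} equilibrium is within $1+\epsilon$ of the optimum, so I can no longer merely display one bad equilibrium --- I must rule out the existence of any efficient one. Here I would design an instance whose unique optimal matching requires a particular bidder to win a particular item, and then show the incentive constraints make that matching unreachable at equilibrium. The argument again proceeds by backward induction on the subgames: I characterize the equilibrium continuation value each bidder can guarantee in every reachable state, and show that in the first round the preemption incentives force a collision --- in every equilibrium some item is claimed by a bidder other than its optimal owner --- costing a fixed multiplicative fraction of the optimal welfare.

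The main obstacle is precisely this universal quantification over equilibria in the second part. The subtlety is that a winning bidder could, in principle, deviate from the ``take one item'' rule to strategically \emph{remove} items from the pool and alter future competition; although such denial always costs the deviator a positive amount, ruling it out requires bounding continuation values carefully enough that no such denial (nor any off-path bid) can restore the efficient matching. Obtaining a clean, parameter-robust characterization of equilibrium continuation play that survives these deviations --- and choosing the instance so that the resulting inefficiency is bounded below by a universal constant $\epsilon$ --- is where the real work lies.
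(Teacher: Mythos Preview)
Your overall plan---exhibit explicit unit-demand instances and analyse the extensive form directly---is exactly what the paper does, but the execution differs in two ways worth noting.

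For the $1.22$ bound, the paper does not optimise over a parametric family. It simply writes down one $3\times 3$ value matrix with integer entries, exhibits a specific subgame-perfect equilibrium (specifying who wins each round and who price-sets), and computes the welfare ratio $209/171>1.22$. The verification that this is an equilibrium is a short case check. Your parametric-optimisation plan would work, but it is more machinery than needed; the constant $1.22$ is not the optimum of any natural program but just the ratio arising from one hand-picked example.

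For the second part, the paper's argument is considerably lighter than the full characterisation you propose. It gives a $4$-bidder, $4$-item instance in which all bidders share the same ranking of items, so in any equilibrium realising the optimal matching a particular bidder (call her $a$) must win round one and take item $A$. The paper then computes $a$'s equilibrium utility from winning versus her utility from deliberately losing round one to bidder $b$, and shows the latter is strictly larger---so the optimal matching cannot be an equilibrium outcome. Since there are finitely many allocations, the welfare gap is a fixed constant. You are right that this style of argument requires knowing the continuation play after $a$'s deviation, and the paper handles this somewhat informally by assuming optimal-in-subgame continuation; your backward-induction characterisation would make this fully rigorous, at the cost of more case analysis.

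Your worry about ``denial'' deviations (a unit-demand winner buying extra items to sabotage future rounds) is legitimate in principle but does not actually bite in either of the paper's instances: the examples are small enough that such deviations can be ruled out by direct inspection, and the paper does not discuss them at all. This is a place where your proposal is more careful than the paper, not a gap in your plan.
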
 

\begin{thm}\label{thm:symmetric}
  The price of anarchy for draft auctions for bidders with symmetric
  concave valuations with respect to Nash equilibria in the Bayesian
  setting or correlated equilibria in the complete information setting
  is at most 8.
\end{thm}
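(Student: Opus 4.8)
The plan is to establish the bound through the smooth-mechanism framework already used for Theorems \ref{thm:subadditive}--\ref{thm:unit-demand}: it suffices to exhibit, for every valuation profile and every (possibly randomized) bid profile $b$, a deviation for each bidder $i$ such that the sum of the resulting deviation-utilities is at least $\lambda\cdot SW(\opt(\vp)) - \mu\cdot\sum_i P_i(b)$, since this yields a price of anarchy of $\max(1,\mu)/\lambda$ for exactly the solution concepts in the statement. We aim for parameters giving $8$. First I would record the two structural facts peculiar to symmetric concave valuations. Writing $v_i(S)=f_i(|S|)$ with $f_i$ concave, nondecreasing and $f_i(0)=0$, the optimal allocation hands bidder $i$ some number $\optall$ of items with $\sum_i \optall = m$ and $SW(\opt(\vp))=\sum_i f_i(\optall)$; and concavity gives the crucial halving inequality $f_i(\optmid)\ge \tfrac12 f_i(\optall)$ together with monotonicity of the per-item average $f_i(k)/k$.

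The deviation is where symmetric concavity pays off. Fix $b$ and extract from the realized run the multiset of per-item prices---each item tagged with the winning bid of the round in which it sold---so that $\sum_i P_i(b)$ equals their sum. Bidder $i$ deviates by targeting only $\optmid$ items rather than the full $\optall$: he repeatedly submits a uniform threshold bid $\poptmid$, requesting the remaining items up to his target, until he has acquired $\optmid$ of them. Setting $\poptmid$ to roughly half the per-item average value of the target bundle, $\tfrac12 f_i(\optmid)/\optmid$, makes the analysis split into two clean cases. If the deviation succeeds, $i$ pays $\poptmid$ per item, so his utility is at least $f_i(\optmid)-\poptmid\cdot\optmid=\tfrac12 f_i(\optmid)\ge \tfrac14 f_i(\optall)$, a constant fraction of his optimal contribution. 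If instead $i$ is blocked, then every round that outbid him sold its items at a per-item price exceeding $\poptmid$, and these prices are charged against $\sum_i P_i(b)$; the uniform-price structure, together with the fact that $i$ only ever competes for $\optmid$ items, keeps each unit of revenue charged a bounded number of times, which is exactly what avoids the logarithmic losses incurred for \xos{} and subadditive valuations.

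Summing the per-bidder guarantees, the ``success'' contributions add up to a constant times $\sum_i f_i(\optall)=SW(\opt(\vp))$ while the ``blocked'' contributions are absorbed into $\mu\sum_i P_i(b)$, yielding smoothness parameters whose ratio $\max(1,\mu)/\lambda$ is $8$; here the factor $2$ separating this bound from the unit-demand bound of Theorem \ref{thm:unit-demand} is precisely the halving inequality $f_i(\optmid)\ge\tfrac12 f_i(\optall)$. The main obstacle is not the arithmetic but making the deviation legitimate in the adaptive, sequential game: when $i$ deviates the other bidders' continuation bids may change, so the threshold $\poptmid$ and the ``blocked'' accounting must be phrased in terms of quantities $i$ can guarantee no matter how opponents respond to the announced history, and one must verify that each unit of collected revenue is charged by at most $\mu$ deviators. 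Handling this charging cleanly---so that the per-item prices dominate what the deviators are forced to forgo---is the crux, and it is the step where the symmetric, uniform-price structure of the draft auction is essential.
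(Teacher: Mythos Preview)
Your high-level plan is right---the result does go through the smoothness framework with parameters $(\tfrac14,2)$, and the target deviation does aim for half the optimal bundle at a per-unit threshold of $\hat v/2$ where $\hat v=f_i(\optall)/\optall$---but there is a genuine gap at exactly the point you flag as ``the crux.'' You describe the deviation as ``repeatedly submit a uniform threshold bid $\poptmid$, requesting the remaining items up to the target.'' As stated this \emph{immediately} alters the public history: if $b_{it}>\poptmid$ in some round $t$ that $i$ was winning under $b$, your deviation loses that round; if $b_{it}<\poptmid$ in a round $i$ was losing, your deviation may win it. Either way the continuation play of the opponents changes and the per-item prices you want to charge against are no longer the prices under $b$. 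Acknowledging the obstacle is not the same as overcoming it, and nothing in ``the symmetric, uniform-price structure of the draft auction'' resolves it.

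The missing idea, which the paper packages as the \emph{Core Deviation}, is to bid $b'_{it}=\max\{b_{it},\,b_i^*\}$ with $b_i^*=\hat v/2$: whenever the original bid $b_{it}$ wins, buy \emph{exactly} what you bought under $b$ (same items, same count); only if you win because $b_i^*>b_{it}$ do you buy all remaining units needed to reach $\halfs$ and then drop out. The point is that until that single ``reveal'' round, the observed history is identical to the run under $b$, so the prices you face really are the prices $p_j(b)$. This also explains a term you omit: in rounds where you win at $b_{it}$ you incur exactly your original payment, which is why the per-player bound carries a $-P_i(b)$ in addition to $-\sum_{j\in S_i^*}p_j(b)$; summing over $i$ (with the $S_i^*$ disjoint) is what produces $\mu=2$. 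Finally, your worry that ``each unit of collected revenue is charged by at most $\mu$ deviators'' is a red herring in this setting: the optimal sets $S_i^*$ partition $[m]$, so $\sum_i\sum_{j\in S_i^*}p_j(b)=\sum_i P_i(b)$ exactly once.
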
 
The price of anarchy bounds we show are exponentially better than
those for sequential item auctions.  In fact, it is possible that
draft auctions have a constant price of anarchy for subadditive
valuations.  We use this contrast to advocate the use of draft
auctions in place of sequential item auctions in practice.  

To prove our upper bounds, we use the smoothness approach introduced
by \cite{Roughgarden2009} and extended to auctions by
\cite{Syrgkanis2013}. It boils down to the following main technique:
for every equilibrium, construct a deviating strategy for each player
which gets at least some fraction of her value in the social-welfare
maximizing allocation, while paying at most a small multiple of the
revenue in equilibrium. The deviations we construct are more involved
than those for sequential item auctions and the technical difficulties
involved are detailed in Section \ref{sec:smoothness}.
On a separate note, we show that efficiency bounds proven via the
smoothness approach for a very special class of valuations directly
extend with only a polylogarithmic degradation to the whole class of
subadditive valuations and with no degradation to the class of
symmetric concave valuations. Specifically, we show that it suffices
to analyze settings where the value of a player is simply proportional
to the number of items he acquired from a specific interest set of
items. Then we show that smoothness for these simple constrained,
cardinality valuations directly implies smoothness for concave
symmetric valuations (i.e. identical items) with no loss, for
submodular valuations with only a $\log(m)$ loss and for subadditive
valuations with a $\log^2(m)$ loss. Our approach may have potential
applications to the analysis of other simple mechanisms for
combinatorial auction settings.



\paragraph{Right to choose auctions.}
A simpler variant of the draft auction is obtained by restricting each
bidder to only pick one item when she wins a round.  This auction
format has been studied and used previously, under the names of
``right to choose'' (RTC) auctions or ``pooled auctions''.
Intuitively, the two formats should not differ much; if a bidder wins
a round at a certain price in an RTC auction, then she should be able
to win subsequent rounds at the same price too, thus simulating a
draft auction.  The reason that our results don't readily extend to
this format is that the deviations we construct in our proofs need the
ability to win multiple items at once. We believe that this is a
technical limitation and that our price of anarchy bounds should
extend to RTC auctions as well.

\paragraph{Illustrative example.} 

\hide{\jmcomment{This isn't quite the same as the example in the SODA paper,
  where the order is ABC, and the numbers are $\alpha$ and
  $\epsilon$. I suspect it's best if we stick to the same convention
  if we're not going to fully flesh this out, so that people can refer
  to the other paper for the figure. Here is the modified version, if you agree:}
}
To illustrate the advantages of draft auctions over
  sequential item auctions, we revisit an instance introduced by
  \cite{PaesLeme2012}, that shows that inefficiency is bound to arise
  at the unique subgame perfect equilibrium in undominated strategies
  of sequential item auctions with unit-demand bidders: Consider an
  instance with $4$ bidders, $a, b, c, d$ and $3$ items $A, B,
  C$. Bidder $a$ has value $v_a=\epsilon$ only for item $A$, bidder
  $b$ has value $\alpha$ for either $A$ or $B$, bidder $c$ has value
  $\alpha$ for either $B$ or $C$ and bidder $d$ has value $\alpha -
  \epsilon$ for $C$.  It is shown by \cite{PaesLeme2012} that assuming
  that auctions occur in order $A, C, B$ then in the unique
  equilibrium, bidder $b$ will let the $\epsilon$-valued bidder $a$,
  win the auction, so that he gets the last auction for item $B$ for
  free.  The reasoning being that bidder $c$ will go for item $C$ and
  will not bid in the last auction. This yields a price of anarchy of
  $3/2$.

However, observe that the latter behavior is very much tied to the
  ordering of the item auctions. If the auctioneer were to
  run a draft auction in the same setting then it is easy to see that
  the optimal allocation can arise at equilibrium: bidders $b,c,d$ all bid
  $\epsilon^+$ at every iteration until they get allocated. If bidder
  $b$ wins he gets item $A$, if bidder $c$ wins then he gets item $B$
  and if bidder $d$ wins he gets item $C$. It is easy to see that no
  bidder has an incentive to deviate.

\hide{
\jmdelete{To illustrate the advantages of draft auctions over sequential item
auctions, we revisit an instance introduced in \cite{PaesLeme2012},
that shows that inefficiency is bound to arise at the unique subgame
perfect equilibrium in undominated strategies of a sequential item
auctions with unit-demand bidders: Consider an instance with $4$
bidders, $a, b, c, d$ and $3$ items $A, B, C$. Bidder $a$ has value
$v_a=\epsilon$ only for item $A$, bidder $b$ has value $10$ for either
$A$ or $B$, bidder $c$ has value $10$ for either $B$ or $C$ and bidder
$d$ has value $9$ for $C$.  It is shown in \cite{PaesLeme2012} that
assuming that auctions occur in order $A, C, B$ then in the unique
equilibrium, bidder $b$ will let the $\epsilon$-valued bidder $a$, win
the auction, so that he gets the last auction for item $B$ for free.
The reasoning being that bidder $c$ will go for item $C$ and will not
bid in the last auction.
However, observe that the latter behavior is very tied to the ordering
of the auctions. On the contrary if the auctioneer were to run a draft
auction in the same setting then it is easy to see that the optimal
allocation can arise at equilibrium: Every bidder bids $\epsilon^+$ at
every iteration until he gets allocated. If bidder $b$ wins he gets
item $A$, if bidder $c$ wins then he gets item $B$ and if bidder $d$
wins he gets item $C$. It is easy to see that no bidder has an
incentive to deviate.  \vscomment{I don't know if we also want to
  prove and state the more general theorem that is based on the
  matroid auctions result. For now I left it just with the example.}}
}

\paragraph{Related work.} 

A predominant approach to combinatorial auctions is the design of
``truthful mechanisms''. Although the VCG mechanism is truthful and
gives the socially optimal allocation, it is not computationally
efficient. There has been a long line of research into designing
truthful mechanisms that run in polynomial time and approximate the
social welfare for various classes of valuations: see \citet{Nisanagt}.

More recently, an alternate approach has been to analyze simple
auctions that are commonly used in practice, by quantifying the
inefficiency of equilibria via the price of anarchy
\citep{Christodoulou2008, Bhawalkar2011, Hassidim2011, Feldman2013, Lucier2010, PaesLeme2010, Lucier2011, Caragiannis2011}.
Our work is most closely related to recent results on sequential item
auctions: \citet{PaesLeme2012} showed a price of anarchy of 2 for
unit-demand valuations in the complete information case, and that for
submodular bidders the price of anarchy can grow linearly with the
number of items.
The positive results were later extended to the incomplete-information
setting by \citet{Syrgkanis2012a}.  
A dominating theme here has been
the emergence of a ``smoothness'' framework that captures many of the
price of anarchy bounds, and allows these bounds to be extended to
larger classes of equilibria:
\citet{Roughgarden2009} to outcomes of learning algorithms and
\citet{Roughgarden2012} and \citet{Syrgkanis2012} to games of
incomplete information.
\citet{Syrgkanis2013} give a  specialized smoothness
framework for auctions with
quasi-linear preferences, which we also use. 
In fact, we provide a way to extend the smoothness for a very simple class of valuations to 
smoothness for subadditive valuations with only a polylogarithmic loss. 
This potentially has applicability in the analysis of other simple
mechanisms for subadditive valuations.
On the negative side, \citet{Feldman2013b}
showed that even when some valuations are unit-demand and some are
additive, the price of anarchy of sequential item auctions can grow linearly with the number of items. 
Our work shows that this inefficiency can be largely alleviated by switching to the draft
auction, thereby portraying that it was not the sequentiality
that caused the inefficiency but rather the specific ordering of the
items being auctioned.

In the economics community the literature on right to choose (RTC)
auctions is the closest to our work.  Most of this work is empirical, some 
in the field and others in the lab, and shows that the revenue of RTC auctions is higher than that of other auctions. 
Among field experiments \citet{ashenfelter} studied the result of RTC auctions in condominium
sales in Miami, which indicated\footnote{We find the results inconclusive, due to reasons we cannot go into here.} that the revenue of RTC auctions could be higher than other formats. 
\citet{alevy} studied RTC auctions for 
water rights sales in Chile and found higher revenue
than in the analogous sequential item auction. 
Laboratory experiments by \citet{eliaz}, \citet{goeree} and  \citet{salmon} 
all find evidence of higher revenue in RTC auctions under various settings. 

Most theoretical work on RTC focuses on very special cases.
\citet{harstad} finds that revenue equivalence holds between RTC and
sequential item auctions, for 2 superadditive bidders.  \citet{gale}
has shown that all Bayes-Nash equilibria yield socially optimal
allocations for 2 unit-demand bidders.  \citep{burguet} shows that RTC
generates more revenue than sequential item auctions, when there are
$2$ items and many single-minded, risk-averse bidders, each equally
likely to prefer either item, whose valuations are drawn i.i.d from a
continuous distribution.
Yet, it is not clear if RTC auctions always generate a higher revenue
than other auctions for a general setting.

The economics literature on sequential item auctions is focused on
exact characterizations, once again for very special cases
\citep{Weber2000,Milgrom1982a}.  These become exceedingly difficult as
we go beyond a few items.

\hide{
\vsdelete{The economics literature on sequential item auctions is focused on
exact characterizations, and these become exceedingly difficult as we
go beyond a few items. \citet{Weber2000} and \citet{Milgrom1982a}
analyze first and second price sequential auctions with unit-demand
bidders in the Bayesian model of incomplete information and show that
in the unique symmetric Bayesian equilibrium the prices have an upward
drift.\jmdelete{ Their prediction was later refuted by empirical evidence (see
e.g. \citet{Ashenfelter1989}) that show a declining price
phenomenon. Several attempts to describe this ``declining price
anomaly'' have since appeared such as \citet{McAfee1993} that
attribute it to risk averse bidders. Although we study full
information games with pure strategy outcomes, we still observe
declining price phenomena in our sequential auction models without
relying to risk-aversion.  \ndcomment{I am not sure if we need to talk
about the decreasing price anomaly and if it adds anything, since we
are changing the auction format and the anomaly is about the
theoretical prediction vs. experimental observations.}}
\citet{Boutilier99} study first price auctions in a setting with
uncertainty, and give a dynamic programming algorithm for finding
optimal auction strategies assuming the distribution of other bids is
stationary in each stage, and shows experimentally that good quality
solutions do emerge when all bidders use this algorithm repeatedly.
The multi-unit demands case has been studied under the complete
information model as well. Several papers
(e.g. \citet{Gale2001,Rodriguez2009}) study the case of two
bidders. In the case of two bidders they show that there is a unique
subgame perfect equilibrium that survives the iterated elimination of
weakly dominated strategies, which is not the case for more than two
bidders.  \citet{Bae2009,Bae2008} study the case of
sequential second price auctions of identical items to two bidders
with concave valuations on homogeneous items. They show that the
unique outcome that survives the iterated elimination of weakly
dominated strategies is inefficient, but achieves a social welfare at
least $1-e^{-1}$ of the optimal.  \ndcomment{How do these fit in the
overall story? How much should we keep?  The related work section is
now almost 1 and 1/2 pages long. We need to cut it down a bit. }}
}

\hide{
\vsdelete{\noindent{\bf Organization} 
The rest of the paper is organized as follows. 
\jmcomment{Need to fill this in}
We define...} }


%

%
%
%
%
%

\section{Preliminaries and Notation}\label{sec:prelims}

\newcommand{\s}{\mathbf s} 
\newcommand{\sminusi}{{\mathbf s}_{-i}} 
\renewcommand{\v}{\mathbf v} 
\newcommand{\vminusi}{{\mathbf v}_{-i}} 
\newcommand{\strats}{\mathcal{S}} 
\newcommand{\stratsp}{\mathcal{S}_P} 
\newcommand{\stratsm}{\mathcal{S}_M} 

Recall that in the {\bf Bayesian setting}, each $v_i$ is drawn
independently from a distribution $\Di$ on a set of possible
valuations $\V_i$, all $\Di$s are public knowledge and $v_i$s are
private information.  In each round, the winner, the winning set and
the winning price are publicly revealed.  The {\bf complete
  information} setting is a special case where each bidder knows the
valuation of all the other bidders.\footnote{It is the case where
  $D_i$ is $v_i$ with probability 1.}

A {\em strategy} $s_i:\V_i\rightarrow \Delta(B_i)$ of bidder $i$ is a
function, from her valuation to a distribution over bid plans $b_i\in
B_i$. Each bid plan $b_i$ determines the bid $b_{it}$ that a player
makes at some round $t$ and the set $X_{it}$ of items he gets
conditional on winning, based on the information $h_{it}$ available to
her up to that round.  For any given valuation profile $\v$, a tuple
of strategies $b=\s(\v) = (s_i(v_i))_{i\in [n]}$ determines the
outcome of the auction; let $u_i(b;v_i)$ denote the utility, (or
expected utility when $b$ is a distribution over bid plans) obtained
by bidder $i$ as a function of the bid plans $b$. Recall that for a
deterministic profile the utility is $v_i(S_i(b)) -P_i(b)$ where
$S_i(b)$ is the set of items $i$ wins and $P_i(b)$ is her total
payment.  Additionally, for any bid plan $b$, we denote with $p_j(b)$
the price that item $j$ was sold at, under bid plan $b$. Observe that
a bid plan actually also contains information about {\em what might
  have happened}, i.e., they specify the result of possible deviations
from the actual outcome, which becomes important in the definitions of
equilibria.  We now define the most basic equilibrium concept, that of
a Nash equilibrium.
\begin{mdef} \label{def:nash} A {\bf pure (resp. mixed) Bayes-Nash}
  equilibrium is a pure (resp. mixed) strategy tuple $\s$ such that no
  player can unilaterally deviate to obtain a better utility. In other
  words,
  \[ \forall~i\in [n], \forall v_i\in \V_i, \forall~b_i'\in B_i,
  ~~\E_{\vminusi}[u_i (b_i',\sminusi(\v_{-i});v_i)] \leq
  \E_{\vminusi}[ u_i(\s(\v);v_i)],\] where as is standard,
  $\sminusi(\v_{-i})$ denotes $(s_j(v_j))_{j\in [n], j\neq i} $, the
  strategy tuple $\s$ restricted to players other than $i$, and
  $(b_i',\sminusi(\v_{-i}))$ denotes the tuple where $s_i(v_i)$ is
  replaced by $b_i'$ in $\s(\v)$.  Similarly $\vminusi$ denotes the
  tuple of valuations $(v_j)_{j\in [n], j\neq i} $.  The expectations
  are taken over the draw of $\vminusi$.
\end{mdef}

A Nash equilibrium in sequential games allows for {\em irrational
  threats}, where an equilibrium strategy of a bidder could be
suboptimal beyond a certain round.  A standard refinement of the Nash
equilibrium for extensive form games is the \emph{subgame perfect
  equilibrium}, that allows only for strategies that constitute an
equilibrium of any subgame, conditional on any possible history of
play (see \cite{fudenberg91} for a formal definition and a more
comprehensive treatment.) Our results also extend to
complete-information correlated equilibria, as defined in
Appendix~\ref{def:correlated}.

\[ \text{Subgame perfect } \subseteq \text{ Nash } \subseteq \text{Correlated Equilibria} \]

The price of anarchy may be defined w.r.t any of these equilibria;
larger classes have higher price of anarchy.  In the Bayesian setting
the price of anarchy is defined as the worst-case ratio of the expectations, over the random values, of the social
welfare at the optimum $\E_{\vp}[SW(\opt(\vp))]$ and at an equilibrium $\E_{\vp}[SW(\s(\vp))]$.


To prove our results we will use the following notion of a smooth
mechanism and its corresponding implications on the price of anarchy.
\begin{mdef}[\cite{Syrgkanis2013}]\label{def:smooth-mechanism} A mechanism is $(\lambda,\mu)$-smooth
  for a class of valuations $\V=\times_i \V_i$ if for any valuation
  profile $v\in \V$, there exists a mapping $b_i': B_i\rightarrow B_i$
  such that for all $b\in \times_i B_i$:
\begin{equation}
  \sum_i u_i\left(b_i'(b_i),b_{-i}; v_i\right)\geq \lambda SW(\opt(\vp)) - \mu \sum_i P_i(b)
\end{equation}
\end{mdef}
\begin{thm}[\cite{Syrgkanis2013}] If a mechanism is
  $(\lambda,\mu)$-smooth then the price of anarchy of mixed Bayes-Nash
  equilibria of the incomplete information setting and of correlated
  equilibria in the complete information setting is at most
  $\frac{\max\{1,\mu\}}{\lambda}$
\end{thm}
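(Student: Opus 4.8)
The plan is to prove, for an arbitrary equilibrium, that the (expected) welfare is at least $\frac{\lambda}{\max\{1,\mu\}}$ times the (expected) optimum; taking the reciprocal gives the stated price-of-anarchy bound. Two elementary facts drive everything. First, since preferences are quasi-linear, the total utility at any bid profile $b$ equals welfare minus revenue, $\sum_i u_i(b;v_i) = SW(b) - \sum_i P_i(b)$. Second, at any equilibrium each bidder can guarantee nonnegative utility (e.g.\ by bidding zero and taking nothing), so $P_i \le v_i(S_i)$ and hence revenue never exceeds welfare, $\sum_i P_i \le SW$, pointwise and therefore in expectation. Definition~\ref{def:smooth-mechanism} supplies the single extra inequality that links an arbitrary profile to the optimum, which I feed into the ``no profitable deviation'' condition of the equilibrium.

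Complete-information correlated equilibria are the easy case. Here the smoothness map $b_i'$ is a function of the realized recommendation $b_i$ and of the (known) valuation profile, so it is an admissible unilateral deviation for a correlated equilibrium. Applying the equilibrium condition to this deviation for each $i$ and summing gives $\sum_i \E_b[u_i(b;v_i)] \ge \E_b\big[\sum_i u_i(b_i'(b_i),b_{-i};v_i)\big]$, and the right-hand side is bounded below, pointwise in $b$, by $\lambda\, SW(\opt) - \mu \sum_i P_i(b)$ via smoothness. Taking expectations over the correlation device reduces everything to the final accounting step below.

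The Bayesian case is the crux, because the smoothness deviation is tailored to the entire profile $\v$, information a bidder does not possess. I would resolve this by a phantom-sampling coupling. Bidder $i$ with value $v_i$ draws a phantom profile $\mathbf{w}$ from the publicly known priors $\Di$ and plays the smoothness bid built for the hybrid profile $(v_i,\mathbf{w}_{-i})$ (chosen to depend only on $v_i$ and the private sample, not on the realized recommendations); this is a legitimate Bayes-Nash deviation. Invoking the equilibrium condition for each $i$, summing, and taking expectations over the true profile $\v$ and the phantom $\mathbf{w}$ yields
\[
\sum_i \E_{\v}[u_i(\s(\v);v_i)] \;\ge\; \E_{\v,\mathbf{w}}\Big[\sum_i u_i\big({b_i'}^{(v_i,\mathbf{w}_{-i})},\, \sminusi(\vminusi);\, v_i\big)\Big].
\]
The obstacle is that the summand mixes per-player profiles $(v_i,\mathbf{w}_{-i})$, whereas smoothness speaks about a \emph{single} common profile with all players deviating at once. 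The key maneuver is that $\v$ and $\mathbf{w}$ are independent and identically distributed, so for each term I may relabel $\vminusi \leftrightarrow \mathbf{w}_{-i}$ without changing the expectation; after this swap every player's deviation is aligned with the one real profile $\v$, while the opponents now bid $\sminusi(\mathbf{w}_{-i})$, i.e.\ their equilibrium bids under the phantom profile. Because these phantom bids assemble into the single consistent profile $b=\s(\mathbf{w})$, the smoothness inequality applies pointwise for profile $\v$, giving a lower bound of $\lambda\, SW(\opt(\v)) - \mu \sum_i P_i(\s(\mathbf{w}))$. Taking expectations and using that $\v,\mathbf{w}$ are identically distributed, so $\E[\sum_i P_i(\s(\mathbf{w}))]$ equals the equilibrium revenue, produces the same master inequality as in the complete-information case.

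It remains to combine the master inequality $\sum_i \E[u_i(\s;v_i)] \ge \lambda\,\E[SW(\opt)] - \mu\,\E[\sum_i P_i]$ with the two elementary facts. Substituting total utility $=$ welfare $-$ revenue and writing $R := \E[\sum_i P_i]$ gives $\E[SW(\s)] \ge \lambda\,\E[SW(\opt)] - (\mu-1)R$. If $\mu \le 1$ the term $-(\mu-1)R$ is nonnegative and $\E[SW(\s)] \ge \lambda\,\E[SW(\opt)]$, a price of anarchy of $1/\lambda$. If $\mu > 1$, I bound $R \le \E[SW(\s)]$ by individual rationality, obtaining $\mu\,\E[SW(\s)] \ge \lambda\,\E[SW(\opt)]$ and a price of anarchy of $\mu/\lambda$. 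The two cases combine to $\max\{1,\mu\}/\lambda$, as claimed. The only genuinely delicate point in the whole argument is the i.i.d.\ relabeling in the Bayesian case; everything else is bookkeeping.
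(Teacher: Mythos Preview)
The paper does not give its own proof of this theorem; it is quoted as a result of \cite{Syrgkanis2013} and used as a black box. Your argument is essentially the standard proof from that reference, and it is correct.

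One point in the Bayesian case deserves to be made explicit. The smoothness map in Definition~\ref{def:smooth-mechanism} is a function $b_i':B_i\to B_i$, so you must say which bid you feed into it. For the relabeling to align every summand with the single action profile $b=\s(\mathbf{w})$, player~$i$'s deviation has to be $b_i'^{(v_i,\mathbf{w}_{-i})}\bigl(s_i(w_i)\bigr)$: she samples the \emph{full} phantom $\mathbf{w}$, evaluates her own equilibrium strategy at the phantom type $w_i$, and feeds that into the smoothness map for the hybrid valuation $(v_i,\mathbf{w}_{-i})$. Your parenthetical ``depends only on $v_i$ and the private sample'' and your later identification $b=\s(\mathbf{w})$ are consistent with this, but the role of $w_i$ is never stated. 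Once it is, the swap $\vminusi\leftrightarrow\mathbf{w}_{-i}$ turns each term into $u_i\bigl(b_i'^{\,\v}(s_i(w_i)),\, \sminusi(\mathbf{w}_{-i});\, v_i\bigr)$, and smoothness at valuation profile $\v$ with action profile $\s(\mathbf{w})$ applies verbatim. The rest of your bookkeeping (utility $=$ welfare $-$ revenue, and revenue $\le$ welfare by individual rationality) is exactly right.
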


%

\hide{ In this work, we introduce a new auction mechanism, which we
  call \emph{sequential draft auctions}.  Let $m$ be the number of
  items for sale and $n$ the number of bidders. Each bidder has a
  combinatorial valuation function $v_i: 2^{[m]} \to \mathbb{R^+}$
  which we assume to be monotone \vsdelete{(agents are not made less
    happy by winning additional items)}. We assume that bidders have
  quasilinear utility, i.e. bidder $i$'s utility for a set $S$ and a
  total payment of $p_i$ has the form $u_i(S,p) = v_i(S) - p_i$.

\begin{mdef}[Draft Auction]
  In the \emph{single-item sequential draft auction}, there are $m$
  rounds. Initially, the set of available items $I^0 = [m]$. In each
  round, the mechanism runs a first-price auction. The strategy space
  in round $t$ for bidder $i$ is $[0, \infty) \times I^t$, a bid and an item.
  Let the bid tuple in round $t$ be $(b^t_1, \ldots, b^t_n)$ and the
  item tuple be $(x^t_1, \ldots, x^t_n)$.  The winner $i =
  \argmax_{i'} b^t_i$ pays $b^t_i$ and wins $x^t_i$. Then, $I^{t+1} =
  I^t\setminus \{x^t_i\}$.

  For the most general \emph{sequential draft auction}, there will
  again be a first-price auction amongst all $n$ bidders in each round
  and $I^0 = [m]$. The strategy space in round $t$ is $[0, \infty]
  \times 2^{I^t}$, a bid and a collection of items. Let $(X^t_1,
  \ldots, X^t_n)$ be the tuple of demanded items. If $i$ is the
  winner, $i$ pays $b^t_i \times \mid X^t_i \mid$ and wins
  $X^t_i$. Then, $I^{t+1} = I^t \setminus X^t_i$.
\end{mdef}

Let us assume there is some fixed tie-breaking order, and that bidders
can bid $b+$, a bid which beats any bid $b$ but whose payment is
$b$. These bids are necessary for best response moves to exist in
first-price auctions. 

We analyze the set of \emph{subgame perfect equilibria} of draft
auctions, the sequential analogue of Nash equilibrium.

\begin{mdef}
  Consider a tuple of strategies $s = (s_1, \ldots, s_n)$ where $s_i$
  is a function from nodes in the game tree to actions for bidder
  $i$. The tuple $s$ is \emph{subgame perfect equilibrium} of a game
  if, for any node $t$ in the game tree, fixing the path to $t$, the
  tuple $(s_1(t), \ldots, s_n(t))$ is a Nash equilibrium in the
  induced game.
\end{mdef}

Let $s$ be any tuple of bidding strategies. Let $S_i$ be the set of
items bidder $i$ wins if everyone plays according to $s$ (in the case
of mixed strategies, $S_i$ will be a random variable). The
\emph{social welfare} of the allocation $S$ is $SW(s) = \sum_i
v_i(S_i)$. We also use the notation $REV$ to denote the total revenue
of a strategy. We will analyze sequential draft auctions in terms of
their social welfare at subgame perfect equilibrium, and use the
optimal allocation as a benchmark. In particular, we will consider the
\emph{Price of Anarchy} for a set of equilibria $T$:

\[ PoA(T) = \max_{s\in T}\frac{SW(OPT)}{SW(s)}\]

where $T$ may be the restriction to pure subgame perfect equilibria,
or the whole set of subgame perfect equilibria, or the set of
Bayes-Nash equilibria, or the set of correlated equilibria. We will
also consider the \emph{Price of Stability} for a set of equilibria $T$:

\[ PoS(T) = \min_{s\in T}\frac{SW(OPT)}{SW(s)}\]

since the best equilibrium may have substantially higher social
welfare than the worst equilibrium. We note that for single-item draft
auctions, pure subgame perfect equilibria are guaranteed to exist, by
a simple extension of the argument made in previous work studying
sequential item auctions~\cite{pureseq}. There is not a simple
extension to the argument which works for general draft auctions.

The analysis of Bayes-Nash and mixed subgame perfect equilibria for
draft auctions follow the smoothness framework introduced in previous
work~\cite{smoothmech}, which allows a bound on the pure price of
anarchy to extend to these more general settings if the proof is of a
certain form.

\subsection{Restrictions on valuation functions}

Our results in the paper show bounds on the price of anarchy of draft
auctions with various restrictions on the valuation functions of
bidders. The multi-unit setting is one where there
are $k$ identical copies of an item. Each $v_i$ is assumed to be
concave in the number of items allocated to bidder $i$.

If $v_i (S_i) = \max_{t\in S_i} v_i(t)$, for all $S_i$, we say $v_i$
is unit-demand. If $v_i(S_i) = \sum_{t\in S_i} v_i(t)$ for all $S_i$,
$v_i$ is an additive valuation function. If $v_i(S \cup T) \leq v_i(S)
+ v_i(T)$ for all $S, T$, then $v_i$ is said to be subadditive. If
$v_i$ is representable as a maximum of linear functions over sets of
items, e.g. $v_i(S) = \max\{\sum_{j\in S}v_{1j}, \ldots, \sum_{j\in
  S}v_{kj}\}$, then $v_i$ is said to belong to the set of $XOS$
valuation functions. All of these restricted classes of valuations
belong to the set of valuations with no complementarities.  Here, we
present the hierarchy of such restrictions on valuation functions,
along with other well-known restrictions valuation functions:

TODO(jamiemmt): make this image and cite Nisan paper

We also present results for the class of valuations with
$k$-restricted complementarities (for a formal definition, see
~\cite{complements}), where the hypergraph describing the
complementarities has edges of size at most $k$.
}

\section{Price of Anarchy Upper Bounds}\label{sec:smoothness}
We will show that draft auctions are smooth mechanisms according to
Definition \ref{def:smooth-mechanism} and therefore they achieve good
social welfare at every correlated equilibrium of the complete
information setting and every mixed Bayes-Nash equilibrium of the
incomplete information setting.

For expository purposes, we begin by analyzing the case of unit-demand
bidders. In this setting, each player is allocated only one item in
the optimal allocation. To prove the smoothness property, we need to
show that from any current bid profile, every player has a deviating
strategy that depends only on his valuation and what he was doing
previously, such that either she gets utility that is a constant
fraction of his value in the optimal allocation, or his item in the
optimal allocation is currently sold at a high price.

One of the technical difficulty is that, unlike sequential item auctions,
a player is not aware, without information about other bidders'
strategies, at which step his optimal item is going to be allocated,
since this is endogenously chosen by one of his opponents. Thus,
deviations of the form: ``behave exactly as previously until the
optimal item arrives and then deviate to acquire it", are not feasible
in the case of draft auctions.\footnote{Even in the complete
  information setting, the time at which an item sells is defined by
  the strategies of other players: using this information to construct
  a deviation would not fit into the smoothness framework. In the case
  of mixed strategies, or incomplete information, the time an item
  sells is a random variable, so such a strategy is not even
  well-defined.}

Instead, our deviations for the unit-demand case have a player always
attempt to get his optimal item, while it is still available, without
changing the observed history when she loses. We show a deviation of
the following form does just that: \emph{At each time step, as long as
  your optimal item is still available, bid the maximum of your
  equilibrium bid and half your value for your optimal item. If you
  ever win, buy your optimal item.}

%

\begin{thm}\label{thm:unit-smooth}The draft auction for unit-demand bidders is a
  $(\frac{1}{2},2)$-smooth mechanism.
\end{thm}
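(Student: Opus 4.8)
The plan is to construct, for each bidder $i$, the deviation $b_i'$ already described --- always chase the optimal item $\jopt{i}$ while it is still available, bidding $\max\{b_{it}, \valopt{i}/2\}$ at round $t$ and buying $\jopt{i}$ upon winning --- and to prove the single-player inequality
\[ u_i(b_i'(b_i), b_{-i}; v_i) \;\geq\; \frac{1}{2}\valopt{i} - p_{\jopt{i}}(b) - P_i(b) \]
for \emph{every} bid profile $b$, where $p_{\jopt{i}}(b)$ is the price at which $\jopt{i}$ sells under $b$ and $P_i(b)$ is $i$'s total payment under $b$. Summing over all $i$ and using that in the unit-demand optimum the items $\jopt{i}$ are distinct (so $\sum_i p_{\jopt{i}}(b) \leq \sum_j p_j(b)$), together with the accounting identity $\sum_j p_j(b) = \sum_i P_i(b)$ (total price collected equals total payment), gives $\sum_i u_i(b_i'(b_i),b_{-i};v_i) \geq \frac{1}{2} SW(\opt(\vp)) - 2\sum_i P_i(b)$, which is exactly $(\frac{1}{2},2)$-smoothness. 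Bidders unallocated in the optimum have $\valopt{i}=0$ and are handled trivially.

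The device I would use to prove the per-player inequality, and to sidestep the endogenous-timing difficulty flagged in the text, is a coupling between the \emph{original} play $O$ (bidder $i$ uses $b_i$) and the \emph{deviation} play $D$ (bidder $i$ uses $b_i'$, others unchanged at $b_{-i}$). Since the deviation only ever \emph{raises} $i$'s bid and, upon winning, buys only $\jopt{i}$, I claim the two plays coincide up to the first round $\tau$ at which $i$ wins in $D$: in every earlier round her higher deviating bid still loses, so the winner and their claimed bundle --- determined entirely by $b_{-i}$ and the common history --- are identical in $O$ and $D$. Crucially, this means I never need to know \emph{when} $\jopt{i}$ sells; I only compare the two coupled plays up to $\tau$.

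The case analysis then runs as follows. If $i$ never acquires $\jopt{i}$ in $D$, then at the round where $\jopt{i}$ sells her deviating bid $\geq \valopt{i}/2$ lost, forcing the price to be $\geq \valopt{i}/2$; as this round precedes her first win it lies in the common play, so $p_{\jopt{i}}(b) = p_{\jopt{i}}(D) \geq \valopt{i}/2$ and her utility $0$ dominates the (nonpositive) right-hand side. If instead $i$ wins $\jopt{i}$ at round $\tau$, she pays $\max\{b_{i\tau},\valopt{i}/2\}$. When $\valopt{i}/2 \geq b_{i\tau}$ she pays $\valopt{i}/2$, giving utility exactly $\frac{1}{2}\valopt{i}$. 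When $b_{i\tau} > \valopt{i}/2$, her deviating bid equals her original bid, so by the coupling she also wins round $\tau$ in $O$ at price $b_{i\tau}$ while claiming a (nonempty) bundle, whence $P_i(b) \geq b_{i\tau}$ and her utility $\valopt{i} - b_{i\tau} \geq \frac{1}{2}\valopt{i} - P_i(b)$.

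The main obstacle --- and the sole place the factor $\mu=2$ is spent --- is the last subcase: when $i$'s original bid already wins round $\tau$, her deviation merely redirects the purchase to $\jopt{i}$, and the price she pays for $\jopt{i}$ in $D$ need bear no relation to what $\jopt{i}$ fetches in $O$, so I charge it to $i$'s \emph{own} original payment $P_i(b)$ rather than to $p_{\jopt{i}}(b)$. This extra charge, stacked on the $p_{\jopt{i}}(b)$ charge from the losing case, is what consumes two units of revenue per unit of optimal welfare. I expect the delicate step to be the coupling claim itself --- in particular, verifying that raising a losing bid cannot alter the round's winner and that a round's winner claims at least one item (so that $P_i(b)\geq b_{i\tau}$); the latter is a mild nondegeneracy assumption on bid plans that I would state explicitly.
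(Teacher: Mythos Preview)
Your proposal is correct and follows essentially the same approach as the paper: the same deviation (bid $\max\{b_{it},\valopt{i}/2\}$ until $\jopt{i}$ is gone, claim $\jopt{i}$ upon winning), the same per-player inequality $u_i(b_i',b_{-i};v_i)\geq \tfrac{1}{2}\valopt{i}-p_{\jopt{i}}(b)-P_i(b)$, and the same two-case analysis summed over players. Your treatment is in fact slightly more careful than the paper's --- you make the coupling between the original and deviation plays explicit, and you correctly flag the nondegeneracy that a winner claims at least one item (the paper writes $P_i(b)=b_{it'}$ where only $P_i(b)\geq b_{it'}$ is warranted), but neither point changes the argument.
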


\hide{
\begin{lem}\label{lem:mapping-prices}
  If, for each player $i$, and every equilibrium $b$, there exists a
  strategy $b'_i$ such that

  \[u_i(b'_i(b_i), b_{-i})\geq \lambda v_i(\opti{i}) - \alpha
  p_{\opti{i}} - \beta P_i(b)\]

where $\opti{i}$ is the set of goods $i$ is allocated at $OPT$, then the
mechanism is $(\lambda, \alpha + \beta)$-smooth.
\end{lem}

\begin{proof}
  First, we note that $\opti{i}\cap \opti{i'} = \emptyset$, since no two
  players are allocated any one good at $OPT(v)$. Then, summing up the
  inequality for the deviation strategies over all players $i$, we
  have

\begin{align*}
\sum_i u_i(b'_i(b_i), b_{-i})&\geq \sum_i (\lambda v_i(\opti{i}) - \alpha p_{\opti{i}} - \beta P_i(b))\\
&\geq \lambda SW(\opt(v)) - \alpha \sum_i p_{\opti{i}} - \beta \sum_i P_i(b)\\
&\geq \lambda SW(\opt(v)) - (\alpha + \beta) \sum_i p_{\opti{i}}
\end{align*}

where the last inequality follows from the observation that the
$\opti{i}$'s are disjoint, thus implying a one-to-one mapping from
players to prices at equilibrium.

\end{proof}
Now, we prove Theorem~\ref{thm:unit-smooth}.}
\begin{proof}
  Consider a unit-demand valuation profile $v$ (i.e. $v_i(S)=\max_{j\in S}v_{ij}$) and let $j_i^*$ be the item assigned to player $i$ in
the optimal matching for valuation profile $v$. We will show that there exists a deviation mapping $b_i': B_i\rightarrow B_i$ for
each player $i$, such that for any bid profile $b$:
\begin{equation}\label{eq:udsmoothness}
  u_i(b_i'(b_i),b_{-i})\geq \frac{1}{2}v_{ij_i^*}-p_{j_i^*}(b)-P_i(b). 
\end{equation}
Consider the following $b_i'$: in every auction $t$, the player bids the
maximum of her previous bid $b_{it}$ (conditional on the history) and
$\frac{v_{ij_i^*}}{2}$, until $j_i^*$ gets
sold. If she ever wins some auction, she picks $j_i^*$.  Suppose that
$j_i^*$ was sold at some auction $t$ under strategy profile $b$. We
consider the following two cases separately, which are exhaustive
since $i$ drops out after round $t$ at most.

\begin{description} 
\item[Case 1: $i$ wins an auction $t'\leq t$ in $b_i'$. ] 
If $i$ wins with bid $b_{it'}$ then there must have been her payment under $b_i$ as well, 
and $P_i(b) = b_{it'}$. Otherwise it is $b_i^* = \frac{v_{ij_i^*}}{2}$. 
Therefore her utility is 
$$u_i(b_i',b_{-i})\geq v_{ij_i^*}-\max\left\{\frac{v_{ij_i^*}}{2},P_i(b)\right\}\geq  v_{ij_i^*} - \frac{ v_{ij_i^*} } {2} - P_i(b)
 \geq\frac{1}{2}v_{ij_i^*}-p_{j_i^*}(b)-P_i(b). $$
\item[Case 2: $i$ does not win any auction in $b_i'$. ] 
 In this case, it must be that $p_{j_i^*}(b)\geq \tfrac{1}{2}v_{ij_i^*}$ since otherwise 
$i$ would have won auction $t$. Her utility in this case  utility is zero. 
Therefore (\ref{eq:udsmoothness}) holds in this case as well. 
\end{description}

Thus we have shown that the deviation $b_i'$ always satisfies
(\ref{eq:udsmoothness}). The smoothness property follows by summing over all players and using the fact that $\sum_i
p_{j_i^*}(b)=\sum_{j\in [m]}p_j(b)=\sum_i P_i(b)$.
\end{proof}

This implies that the draft auction has Bayes-Nash and
correlated price of anarchy of at most $4$ (Theorem \ref{thm:unit-demand}). This bound is comparable
but not identical to our bound on the pure price of anarchy, which we
show to be upper-bounded by $2$ in the appendix. 

\subsection{Smoothness for constraint-homogeneous valuations }

As a next step to general subadditive valuations, we analyze
smoothness of the draft auction for a simple class of valuations.  We
subsequently show that this is the key element in proving our efficiency
results for all subadditive valuations. Specifically, we construct a
deviating strategy for the class of valuations, where each player $i$
is interested in a subset of the items $S\subseteq [m]$, and treats
all items in $S$ homogeneously, i.e. their value is a linear in the
number of items from the interest set. We will denote such valuations
as \emph{constraint-homogeneous valuations}.

\begin{mdef}[Constraint-Homogeneous Valuation]

  A valuation on a set of items is \emph{constraint-homogeneous} if it
  is defined via an interest set $S$ and a per-unit value $\hat{v}$
  such that:
\begin{equation}
\forall T\subseteq [m]: v(T)=\hat{v}\cdot |T\cap S|
\end{equation}
\end{mdef}

Unlike the unit-demand case, each player might be allocated several
units in the optimal allocation. As before, a good deviating strategy
should achieve a constant fraction of a player's valuations for her
optimal number of units, or show that the price being paid for those
units at equilibrium is high enough.  Constructing such a deviating
strategy is inherently more difficult than in the unit demand
case. The main new technical difficulty here is to construct
deviations which buy multiple units, while paying only equilibrium
prices. Once a deviation has affected the winning history, the prices
in the remaining off-equilibrium subgame are difficult to reason
about.  Thus, a player should always be trying to acquire her
optimal number of units at a good price, whilst at the same time not
changing the observed history of play.

The first idea is that the ``right price" that a player should bid to
acquire her units is half of the per-unit value, and then try to
acquire the ``right number'' of items, which is at least half the
number of units in her optimal allocation. However, consider a round
where her equilibrium bid is higher than the ``right price''. If the
bidder shades her bids down to the right price, then she may not win
that round, which changes the history for all the other players and
sets the game down an off-equilibrium path.  
In order to avoid this, the deviation bids the maximum of the
original bid and the right price. If the original bid is higher, she
follows the original strategy and picks the same set of items
\footnote{If the deviation were for the bidder to buy all the right
  number of units when she won because of her equilibrium bid, she
  might pay too much for them.}. If the right price is higher, she
then buys sufficient number of items to win the ``right number'' of
units, and drops out of subsequent rounds. 

The main technical meat of the paper which uses the construction of
such a deviation and forms the basis of almost all the smoothness
results in the paper is captured in the following lemma.

\begin{lem}[Core Deviation Lemma]\label{lem:multi-unit}

  Suppose that a player $i$ has a constraint-homogeneous valuation
  with interest set $S$ and per-unit value $\hat{v}$. Then in a draft
  auction there exists a deviation mapping $b_i':B_i\rightarrow B_i$
  such that, for any strategy profile $b$:
$$u_i(b_i'(b_i),b_{-i};v_i)\geq   \frac{1}{2} \frac{\hat{v}\cdot |S|}{2}-\sum_{j\in S}p_j-P_i(b).$$

\end{lem}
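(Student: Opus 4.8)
The plan is to exhibit the deviation sketched just above the lemma and then charge every item of the interest set either to the payment term $P_i(b)$ or to the price term $\sum_{j\in S}p_j$. Write $k=|S|$, call $\hat v/2$ the ``right price'' and $k/2$ the ``right number''. The deviation $b_i'$ is: at every round $t$ bid $\max\{b_{it},\hat v/2\}$, where $b_{it}$ is the equilibrium bid dictated by the history generated by the deviation so far; as long as the player wins only through her original bid (i.e. $b_{it}\ge \hat v/2$) she picks exactly the equilibrium set $X_{it}$ and continues; the first time she wins a round $t^*$ \emph{only because she raised her bid to} $\hat v/2$ (that is, $b_{it^*}<\hat v/2$), she instead grabs $\min(k/2,\,r)$ of the currently available items of $S$, where $r$ is the number of such items, and then bids $0$ forever after.

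The structural fact I would establish first is that up to and including round $t^*$ the deviation reproduces the equilibrium history verbatim. Whenever the player loses she behaves exactly as in equilibrium: either she bids her equilibrium bid, or she raised to $\hat v/2$ but the round was nonetheless won at a price $\ge\hat v/2$, so the winner and the allocated set are unchanged; and whenever she wins through her original bid she takes the equilibrium set. Hence the pool of remaining items, and in particular which items of $S$ have been sold and at what prices, coincide with equilibrium until $t^*$. Two consequences follow: (i) every item of $S$ sold to another player while $i$ is still active has price $\ge\hat v/2$, since $i$ bid at least $\hat v/2$ and lost; and (ii) the payments $i$ makes from original-bid wins before $t^*$ form a subset of her equilibrium payments, hence total at most $P_i(b)$.

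With this in hand I would split into two cases. \textbf{Case A: a raise-win occurs at some $t^*$.} If at least $k/2$ interest items are available there, she grabs $k/2$ of them, obtaining value $\ge\hat v k/2$ while paying at most $\tfrac{\hat v}{2}\cdot\tfrac{k}{2}=\hat v k/4$ at $t^*$ and at most $P_i(b)$ earlier, so her utility is at least $\hat v k/4-P_i(b)\ge \hat v k/4-\sum_{j\in S}p_j-P_i(b)$, the target. If fewer than $k/2$ are available, let $a$ be the interest items she already won and $r$ the number still available; then $k-a-r$ interest items were sold to others and each contributes $\ge\hat v/2$ to $\sum_{j\in S}p_j$ by (i). Combining value $\hat v(a+r)$, the $t^*$ payment $\tfrac{\hat v}{2}r$, the earlier payments $\le P_i(b)$, and $\sum_{j\in S}p_j\ge\tfrac{\hat v}{2}(k-a-r)$ gives utility $\ge \tfrac{\hat v a}{2}+\tfrac{\hat v k}{2}-\sum_{j\in S}p_j-P_i(b)$, which exceeds the target with room to spare. \textbf{Case B: no raise-win ever occurs.} Then the deviation coincides with equilibrium play for $i$, so her utility is $\hat v a-P_i(b)$ with $a$ the number of interest items she wins; every other item of $S$ was lost while she bid $\ge\hat v/2$, hence by (i) contributes $\ge\hat v/2$ to $\sum_{j\in S}p_j$, and $\hat v a+\sum_{j\in S}p_j\ge \hat v a+\tfrac{\hat v}{2}(k-a)\ge\tfrac{\hat v k}{2}$ again yields the claim.

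The main obstacle, and the step needing the most care, is justifying that raising bids to $\hat v/2$ does not perturb the observed history before $t^*$: this is exactly what keeps the off-equilibrium prices out of the analysis and lets me reuse the equilibrium quantities $p_j$ and $P_i(b)$ as the charging targets. The only delicate points are the tie-breaking at exactly $\hat v/2$, which I would handle as in the unit-demand proof (bidding so as to win ties only when intended, or passing to $\tfrac{\hat v}{2}{}^{+}$), and the bookkeeping that the units grabbed at $t^*$ are drawn only from $S$, so that each purchased unit carries value $\hat v$ against a payment of $\hat v/2$.
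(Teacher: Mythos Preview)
Your deviation is essentially the paper's Core Deviation, and your structural fact---that the observed history is unchanged until the first raise-win---is exactly the crucial observation the paper isolates. The argument is correct.

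One organizational difference is worth recording. The paper has the deviator stop bidding once she has accumulated $\lceil |S|/2\rceil$ units \emph{or} once too few units remain, and then splits on whether she ends with at least $\lceil |S|/2\rceil$ units (Lemma~\ref{lem:cd1}) or not (Lemma~\ref{lem:cd2}); in the latter case an auxiliary argument (Lemma~\ref{lem:cd3}) is needed to show that the $\lceil|S|/2\rceil$-th lowest equilibrium price among units of $S$ is at least $\hat v/2$, by locating a specific auction at which $i$ is still active and would otherwise have completed her quota. Your variant instead keeps $i$ active until the first raise-win (or throughout, in Case~B), which makes the price-charging step immediate: any unit of $S$ sold to another bidder while $i$ is still participating must fetch at least $\hat v/2$, simply because $i$ bid $\hat v/2$ and lost that round. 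This sidesteps Lemma~\ref{lem:cd3} entirely, at the cost of a three-way split (raise-win with $\ge k/2$ units left, raise-win with fewer, no raise-win) rather than the paper's two-way split. Both analyses land on the same inequality.
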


\newcommand{\halfs}{s^*}

The lemma is proved using the following deviation which we call the
Core Deviation.  We refer to the items in $S$ as {\em units}, and to
items not in $S$ as {\em items}.  We denote by $k_{it}$ (resp. $k_{i,<
  t}$) the number of units that player $i$ obtains in (resp. before)
auction $t$ under the original strategy $b_i$. 
We use the  shorthand notation  $\halfs:= \ceil{\frac{|S|}{2}}$

\begin{mdef}[Core Deviation]  \label{def:coredev} 
The core deviation $b'_i$ for player $i$ with a constraint-homogeneous valuation
  with interest set $S$ and per-unit value $\hat{v}$ is defined as follows.

  Let $b_i^*=\frac{\hat{v}}{2}$. In every auction $t$, she submits
  $b_{it}'=\max\left\{b_i^*, b_{it}\right\}$. If she wins with bid
  $b_i^*$, she buys $\halfs-k_{i,< t}$ units of $S$ and drops out.  If
  she wins with a bid of $b_{it}$, she buys what she did under $b_i$:
  $k_{it}$ units together with any other items she was buying under
  strategy profile $b_i$ at auction $t$. She continues to bid
  $b_{it}'$ until she acquires $\halfs$ units or the number of
  units remaining are not sufficient for her to complete
  $\halfs$ units.
\end{mdef}

The crucial observation is this: as long as the player hasn't already
acquired $\halfs$ units, she has not affected the game path created by
strategy $b_i$ in any way. From the perspective of the other bidders,
she behaved exactly as under $b_i$, by winning at her price under
$b_i$ and getting the items she would have got under $b_i$. If she
ever wins at a higher price, she acquires all the units needed to
reach $\halfs$ units in that auction and then drops out. Thus the
prices that she faces in all the auctions prior to having won $\halfs$
units are the same as the prices under strategy $b_i$.

The Core Deviation Lemma follows immediately from Lemmas \ref{lem:cd1} 
and \ref{lem:cd2}. 

\begin{lem} \label{lem:cd1} 
If player $i$ wins at least $\halfs$ units of $S$ under the Core Deviation $b_i'$ then 
$$u_i(b_i'(b_i),b_{-i}; v_i)\geq  \frac{1}{2}\halfs\hat{v}-P_i(b).$$ 
\end{lem}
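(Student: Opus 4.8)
The plan is to track player $i$'s value and payment under the Core Deviation directly, leaning on the crucial observation stated just above the lemma: as long as she has not yet acquired $\halfs$ units, her play is indistinguishable to the other bidders from $b_i$, so the winner, winning bid, and items sold in every such round — and hence the prices she herself faces — coincide exactly with those under $b_i$. Since the hypothesis is that she ultimately wins at least $\halfs$ units, her value is immediately $v_i = \hat{v}\cdot(\text{units won}) \ge \hat{v}\,\halfs$; the entire content of the proof is therefore to upper bound her total payment.

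First I would let $t^*$ denote the round in which the deviation first reaches $\halfs$ units. By the observation, in every round strictly before $t^*$ she wins at her original bid $b_{it}$ (had she won at the raised bid $b_i^* = \hat{v}/2$ she would have jumped to $\halfs$ units and dropped out), so she buys exactly the items she bought under $b_i$ at exactly the same prices, and her payment over all these rounds equals her payment under $b_i$ restricted to rounds before $t^*$, which is at most $P_i(b)$. It then remains to analyze round $t^*$ in the two exhaustive cases determined by the fact that she bids $\max\{b_i^*,b_{it}\}$: either she wins round $t^*$ at her original bid, or she wins it at $b_i^*$.

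In the first case she again buys exactly what she bought under $b_i$ in round $t^*$, so her entire payment equals her payment under $b_i$ over rounds $\le t^*$ and is at most $P_i(b)$, while her value is at least $\hat{v}\,\halfs$; hence her utility is at least $\hat{v}\,\halfs - P_i(b) \ge \tfrac{1}{2}\hat{v}\,\halfs - P_i(b)$. In the second case she wins at $b_i^*=\hat{v}/2$, buys exactly $\halfs - k_{i,<t^*}$ units to reach $\halfs$, and drops out; her payment in this round is $\tfrac{\hat{v}}{2}(\halfs - k_{i,<t^*}) \le \tfrac{\hat{v}}{2}\halfs$, so combining the at-most-$P_i(b)$ payment from the earlier rounds with the value $\hat{v}\,\halfs$ yields utility at least $\hat{v}\,\halfs - P_i(b) - \tfrac{\hat{v}}{2}\halfs = \tfrac{1}{2}\hat{v}\,\halfs - P_i(b)$. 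Both cases give the claimed bound, proving the lemma.

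The step I expect to be the real work — as opposed to any hard estimate — is the bookkeeping behind ``the payment over the pre-$t^*$ rounds equals the corresponding part of $P_i(b)$.'' This is precisely where the crucial observation is invoked, and one must be careful that the items she buys in those rounds (including items \emph{outside} $S$, which contribute to her payment but not to her value) are literally the same items at the same prices as under $b_i$, so that non-value-generating purchases are safely absorbed into $P_i(b)$. Once that is pinned down, the only quantitative step is the trivial inequality $\halfs - k_{i,<t^*} \le \halfs$ used in the second case.
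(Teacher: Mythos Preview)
Your proof is correct and follows essentially the same approach as the paper's. The paper's version is terser: rather than defining $t^*$ and splitting on whether the final winning bid there is $b_{it^*}$ or $b_i^*$, it simply observes that across all rounds won with bid $b_{it}$ the payment totals at most $P_i(b)$, while the (at most one) round won with bid $b_i^*$ contributes at most $\halfs\, b_i^* = \tfrac{\hat{v}}{2}\halfs$, yielding total payment at most $P_i(b)+\tfrac{\hat{v}}{2}\halfs$ and hence the same utility bound.
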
 
\begin{proof} 
  If player $i$ wins at least $\halfs$ units of $S$ under $b_i'$ then
  the valuation for the items she wins is at least $\halfs
  \hat{v}$.  For the auctions in which she wins with a bid of $b_{it}$
  she pays a total amount of at most $P_i(b)$ and for the (at most
  one) auction she wins with a bid of $b_i^*$ she pays at most
  $\halfs b_i^* $. So her total payment is at most
  $\halfs b_i^* + P_i(b)=\halfs
  \frac{\hat{v}}{2}+P_i(b)$.
\end{proof} 

\begin{lem} \label{lem:cd2} 

  If player $i$ wins less than $\halfs$ units of $S$ under the Core
  Deviation $b_i'$ then
\[u_i(b_i'(b_i),b_{-i};v_i)\geq   \frac{1}{2}\halfs\hat{v}-\sum_{j\in S}p_j-P_i(b).\]

\end{lem}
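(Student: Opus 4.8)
The plan is to reduce Lemma~\ref{lem:cd2} to the single price inequality
\[ \sum_{j\in S} p_j \;\geq\; \tfrac{1}{2}\halfs\hat{v}, \]
and combine it with the easy bound $u_i(b_i'(b_i),b_{-i};v_i)\geq -P_i(b)$. Rewriting the target as $u_i(b_i'(b_i),b_{-i};v_i)+\sum_{j\in S}p_j+P_i(b)\geq \tfrac{1}{2}\halfs\hat{v}$ makes this immediate: the price term alone dominates the right-hand side, while the other two terms sum to something nonnegative. The utility bound is the routine half. Under $b_i'$ the total value is $\hat{v}$ times the number of units of $S$ won, while the payment is at most $P_i(b)$ (on the auctions she wins at her original bid $b_{it}$, where she buys exactly what she bought under $b_i$) plus the cost $b_i^*\ell=\tfrac{\hat v}{2}\ell$ of the at most one auction she wins at $b_i^*$, where she acquires $\ell$ units of $S$ each worth $\hat v$. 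Hence $u_i(b_i'(b_i),b_{-i};v_i)\geq \hat{v}\ell-\bigl(P_i(b)+\tfrac{\hat v}{2}\ell\bigr)=\tfrac{\hat v}{2}\ell-P_i(b)\geq -P_i(b)$.

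The heart of the argument is the price inequality, which rests on the invariant noted just before the lemma: as long as player $i$ has not won $\halfs$ units, $b_i'$ reproduces the outcome of $b$ exactly --- in each auction she either repeats her $b_i$ action (when $b_{it}\geq b_i^*$) or bids strictly higher yet still loses, neither of which changes the winner or the winning price --- and throughout she bids at least $b_i^*=\hat{v}/2$. Consider the first moment at which she stops being active: either she wins an auction at $b_i^*$ and drops out, or the stopping rule fires because the units of $S$ remaining are too few for her to reach $\halfs$, or the auction ends. Let $W$ be the units of $S$ she has won by then, $R$ the units of $S$ still unsold, and $L$ the units of $S$ already won by other players, so $W+L+R=|S|$. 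Because she bid $\geq \hat{v}/2$ in every auction up to this moment and the path agrees with $b$, each of those $L$ units was sold to someone else at price $\geq \hat{v}/2$, at the very price $p_j(b)$ appearing in the statement.

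It remains to show $L\geq\halfs$. We are in the regime $W<\halfs$ of Lemma~\ref{lem:cd2}. If she stops by winning at $b_i^*$ (taking all insufficient remaining units) or because the auction ends, then $R=0$, so $L=|S|-W>|S|-\halfs$. If the stopping rule fires preemptively, then by definition $W+R<\halfs$, so again $L=|S|-W-R>|S|-\halfs$. Since $|S|-\halfs=\lfloor |S|/2\rfloor$ and $L$ is an integer, $L\geq \lfloor|S|/2\rfloor+1\geq \halfs$. Thus at least $\halfs$ units of $S$ are sold to other players at price $\geq \hat{v}/2$, giving $\sum_{j\in S}p_j\geq \halfs\cdot\tfrac{\hat v}{2}=\tfrac{1}{2}\halfs\hat{v}$, which with the utility bound completes the proof.

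The delicate point --- and the reason for tracking $L$ rather than naively bounding $|S|$ minus what $i$ wins --- is the early drop-out case: once player $i$ leaves, the $R$ surviving units of $S$ may be sold cheaply, since she is no longer propping up their price, so those sales cannot be charged at $\hat{v}/2$. The stopping rule is exactly what rescues the count, since it only fires when $W+R<\halfs$, which already forces $L>|S|-\halfs\geq\halfs-1$. The other bookkeeping to watch is that the prices realized on the deviation path equal the numbers $p_j(b)$ in the statement; this is guaranteed by the invariant, as the deviation never perturbs the game before $i$ has won $\halfs$ units, and the single auction in which she might win at $b_i^*$ and drop out does not affect the already-completed high-priced sales to the other bidders.
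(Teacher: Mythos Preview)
Your proof is correct and follows essentially the same two-step structure as the paper: establish $\sum_{j\in S}p_j(b)\geq \tfrac12\halfs\hat v$ and combine it with the easy bound $u_i(b_i'(b_i),b_{-i};v_i)\geq -P_i(b)$. The one organizational difference is in how you derive the price inequality. The paper sorts the $b$-prices of units in $S$, invokes a separate Lemma~\ref{lem:cd3} to show the $\halfs$-th lowest price is at least $\hat v/2$ (via a carefully chosen auction $t^*$, the first low-price auction not won by $i$), and then observes that the top $|S|-\halfs+1\geq\halfs$ prices are each at least $\hat v/2$. You instead count directly: the $L$ units sold to other players while $i$ is still active under $b_i'$ are sold at their $b$-prices (by the path-agreement invariant) and each such price is at least $b_i^*=\hat v/2$; the stopping condition $W+R<\halfs$ then forces $L=|S|-W-R>|S|-\halfs$, hence $L\geq\halfs$. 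This sidesteps the definition of $t^*$ and the bookkeeping in Lemma~\ref{lem:cd3}, at the cost of a slightly less reusable statement. Both arguments exploit exactly the same fact---that $i$ is bidding at least $\hat v/2$ in every round before she could possibly drop out---so the difference is packaging rather than substance.

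One small remark: your ``delicate point'' paragraph about the $R$ surviving units being sold cheaply is phrased in terms of prices under $b_i'$, but the quantity you are bounding is $\sum_{j\in S}p_j(b)$, the prices under the original profile $b$. The correct reason those $R$ units cannot be charged at $\hat v/2$ is that under $b$ her bids $b_{it}$ in the later rounds need not be at least $\hat v/2$, not that she has dropped out under the deviation. This does not affect the argument, which only uses the $L$ units.
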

\begin{proof} 
  Consider the the auction under the original strategy profile $b$.  Let (by
  an abuse of notation) $p_1\leq p_2\leq \ldots \leq p_{|S|}$ be the
  prices at which the items in $S$ are sold under $b$. This is not
  necessarily the order in which they are sold.  We show in Lemma \ref{lem:cd3} that, when
  bidder $i$ wins less than $\halfs$ units under $b_i'$, it must be
  that $p_{\halfs} \geq \frac{\hat{v}}{2}$. Using this we
  obtain that
\begin{equation}\label{eq:cd2} \sum_{j\in S}p_j\geq 
  \sum_{l=\halfs}^{|S|}p_l\geq \left(|S|-\halfs+1\right)p_{\halfs}\geq
 \halfs p_{\halfs} \geq \frac{\hat{v}}{2}\halfs,
\end{equation}
where we also used the simple observation that $\halfs\leq \frac{|S|+1}{2}$.

The total payment of player $i$ under $b_i'$ in this case where she
wins less than $|S|/2$ units of $S$ is at most $P_i(b)$, therefore her
utility is (trivially) at least $-P_i(b)$. The lemma now follows from
adding the inequalities $u_i(b_i'(b_i),b_{-i};v_i)\geq-P_i(b)$ and
$0\geq \frac{\hat{v}}{2}\halfs - \sum_{j\in S}p_j$ (which holds by inequality (\ref{eq:cd2})).
\end{proof} 

\begin{lem} \label{lem:cd3} If player $i$ wins less than $\halfs$ units
  of $S$ under the Core Deviation $b_i'$ then the $\halfs$-th lowest price of the
  units in $S$ under $b$, is at least $\hat{v}/2$.
\end{lem} 
\begin{proof} 
  First, observe that if player $i$ was obtaining at least $\halfs$
  units under $b$ then she is definitely winning $\halfs$ units under
  $b_i'$, since she is always bidding at least as high.  So, we can assume that under $b$ player $i$ wins fewer than
  $\halfs$ units.

  Recall that $p_1\leq p_2\leq \ldots \leq p_{|S|}$ are the prices at
  which the units in $S$ are sold under $b$. Let $P_t$ be the price
  of auction $t$ (under $b$).  Let $t^*$ be the first auction that
  was won at price $P_{t^*}\leq p_{\halfs}$ under $b$ but not by
  bidder $i$.  We know that such an auction must exist; under
  $b$ there are $\halfs$ units of $S$ that are sold at a price at
  most $p_{\halfs}$, and since player $i$ wins less than $\halfs$ of
  them, some of them are not won by player $i$.

  We now argue that player $i$ is still bidding in auction $t^*$ under
  $b'_i$.  First of all, she has not won $\halfs$ units prior to $t^*$.
  The other condition needed for her to be active is that there are at
  least $\halfs-k_{i,<t^*}$ units available for sale in that
  auction. This follows from the fact that for any auction $t<t^*$ for
  which $P_t\leq p_{\halfs}$, we know that player $i$ was winning under
  $b_i$. Thus every unit that was sold prior to $t^*$ at a price of
  less than or equal to $p_{\halfs}$ was sold to player $i$. There are
  $\halfs$ units sold at a price $\leq p_{\halfs}$ and the number of
  such units sold prior to $t^*$ is at most the number of total units
  won by bidder $i$ prior to $t^*$. Thus the number of available units
  available at $t^*$ is at least: $\halfs-k_{i,<t^*}$.

  Finally, we argue that $P_{t^*}\geq b_i^*$. Suppose for the sake of
  contradiction that $P_{t^*}<b_i^*$. Then player $i$ wins auction
  $t^*$. Since she was not winning $t^*$ under $b_i$, it must be that
  she is winning $t^*$ with a bid of $b_i^*$. Thus in that auction she
  will buy every unit needed to reach $\halfs$ units. By the analysis
  in the previous paragraph, we know that there are still enough units
  available for sale to reach $\halfs$. Thus in this case she will win
  $\halfs$ items, a contradiction with the main assumption of the Lemma.  Therefore, $ b_i^*\leq P_{t^*}$ and
  by definition, $P_{t^*}\leq p_{\halfs} $ and $b_i^* = \frac{\hat{v}}{2}$.
\end{proof} 

An easy corollary of the above core deviation lemma is that when all
players have constraint-homogeneous valuations, the draft auction is a
$\left(\frac{1}{4},2\right)$-smooth mechanism, and thus has a price of
anarchy of at most 8 for these valuations.
\begin{cor}\label{cor:smooth-constraint}
  The draft auction is a $(\frac{1}{4},2)$-smooth mechanism when
  bidders have constraint-homogeneous valuations. (proof in Appendix \ref{sec:omitted-smoothness})
\end{cor}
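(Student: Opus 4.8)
The plan is to reduce everything to the Core Deviation Lemma (Lemma~\ref{lem:multi-unit}) by invoking it on a cleverly chosen \emph{sub}-interest set, so that the per-player price terms telescope into the total payment. Fix a valuation profile $v$ in which every player $i$ is constraint-homogeneous with interest set $S_i$ and per-unit value $\hat v_i$, and let $O_i$ be the bundle that player $i$ receives in a welfare-optimal allocation $\opt(\vp)$. I would set $S_i' := O_i \cap S_i$, the units of $i$'s interest set awarded to $i$ at the optimum, and record the two basic facts $\hat v_i\,|S_i'| = v_i(O_i)$ and $\sum_i v_i(O_i) = SW(\opt(\vp))$.

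First I would argue that the Core Deviation of Definition~\ref{def:coredev}, together with its guarantee, remains valid when player $i$ pretends her interest set is the smaller set $S_i'$ rather than $S_i$. The point is that every unit in $S_i' \subseteq S_i$ is still worth exactly $\hat v_i$ to player $i$, so the deviation built for $S_i'$ only ever buys items that $i$ genuinely values at $\hat v_i$ each, and the price/availability arguments of Lemmas~\ref{lem:cd1}--\ref{lem:cd3} only ever reference units of $S_i'$. Hence applying Lemma~\ref{lem:multi-unit} with $S_i'$ yields, for every bid profile $b$,
\[
u_i(b_i'(b_i),b_{-i};v_i) \;\geq\; \tfrac14\,\hat v_i\,|S_i'| \;-\; \sum_{j\in S_i'} p_j \;-\; P_i(b) \;=\; \tfrac14\,v_i(O_i) \;-\; \sum_{j\in S_i'} p_j \;-\; P_i(b).
\]

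Next I would sum this over all players. The crucial observation is that the sets $S_i' \subseteq O_i$ are pairwise disjoint, since the optimal bundles $O_i$ partition $[m]$; therefore $\sum_i \sum_{j\in S_i'} p_j \le \sum_{j\in[m]} p_j$. Combining this with the accounting identity $\sum_{j\in[m]} p_j = \sum_i P_i(b)$ (each unit's sale price is exactly what its buyer pays for it), the summed inequality becomes
\[
\sum_i u_i(b_i'(b_i),b_{-i};v_i) \;\geq\; \tfrac14\,SW(\opt(\vp)) \;-\; \sum_{j\in[m]}p_j \;-\; \sum_i P_i(b) \;=\; \tfrac14\,SW(\opt(\vp)) \;-\; 2\sum_i P_i(b),
\]
which is precisely the $(\tfrac14,2)$-smoothness inequality of Definition~\ref{def:smooth-mechanism}.

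The only genuinely delicate step is the first one: justifying that Lemma~\ref{lem:multi-unit} may be invoked with the restricted interest set $S_i'$ in place of the true interest set $S_i$. This is where I expect the main obstacle to lie, and where I would be most careful — verifying that shrinking the interest set neither alters the per-unit value seen by the deviation nor breaks the counting argument of Lemma~\ref{lem:cd3}, whose availability and price bookkeeping only concern the units the deviation actually targets. Once that reduction is granted, the disjointness of the optimal bundles and the revenue identity do all the remaining work.
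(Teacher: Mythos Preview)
Your proposal is correct and follows essentially the same route as the paper: restrict each player's interest set to $S_i' = S_i \cap O_i$, apply the Core Deviation Lemma there, and sum using disjointness of the $O_i$'s together with $\sum_j p_j = \sum_i P_i(b)$. The paper packages your ``delicate step'' slightly differently---it introduces an auxiliary constraint-homogeneous valuation $v_i'$ with interest set $S_i'$, applies Lemma~\ref{lem:multi-unit} to $v_i'$, and then uses $u_i(\cdot;v_i)\ge u_i(\cdot;v_i')$---but this is the same idea as your direct observation that units in $S_i'$ are still worth $\hat v_i$ to player $i$.
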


\subsection{Extension to more general valuations} 
\label{sec:extension}
We will next show that smoothness for constraint-homogeneous
valuations implies smoothness for a much larger class of
valuations. We achieve this based on the following re-interpretation
of the results in \cite{Syrgkanis2013}\footnote{\cite{HartlineLectures} gives a special case
of this re-interpretation for the mechanism defined by simultaneous single-item auctions, showing how smoothness for additive valuations implies smoothness for unit-demand (and XOS) valuations}.

\begin{mdef}[Pointwise Valuation Approximation]\label{def:approximation} A
  valuation class $V$ is pointwise $\beta$-approximated by a valuation
  class $V'$, if for any valuation profile $v\in V$, and for any set
  $S\subseteq [m]$, there exists a valuation profile $v'\in V'$ such
  that: $\beta v'(S)\geq v(S)$ and for all $T\subseteq [m]$: $v(T)\geq
  v'(T)$.
\end{mdef}

Note that, importantly, the valuation $v'$ can depend on $S$. $\beta
v'$ only needs to upper bound $v$ at $S$, while $v'$ needs to lower
bound $v$ everywhere else. This is much weaker than the related notion
of approximation by a function class, where for every $v$ we ask for a
single $v'$ such that $v$ is sandwiched between $\beta v'$ and $v'$
everywhere.

\begin{lem}[Extension Lemma] If a mechanism for a combinatorial
  auction setting is $(\lambda,\mu)$-smooth for the class of
  valuations $V'$ and $V$ is pointwise $\beta$-approximated by $V'$, then
  it is $\left(\frac{\lambda}{\beta},\mu\right)$-smooth for the class
  $V$.
\end{lem}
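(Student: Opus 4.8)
The plan is to reuse, essentially verbatim, the deviation guaranteed by $V'$-smoothness and transfer the guarantee to $V$ using the two directions of the pointwise approximation. First I would fix an arbitrary profile $v\in V$ and let $(\sjopt{i})_i$ denote the optimal allocation $\opt(\vp)$, so that $\sum_i v_i(\sjopt{i}) = SW(\opt(\vp))$. For each player $i$ I would invoke Definition~\ref{def:approximation} with the set $S=\sjopt{i}$ to obtain a surrogate valuation $v_i'\in V_i'$ that lower-bounds $v_i$ everywhere, $v_i(T)\geq v_i'(T)$ for all $T$, while losing only a factor $\beta$ on the single bundle of interest, $\beta v_i'(\sjopt{i})\geq v_i(\sjopt{i})$. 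Collecting these gives a profile $v'=(v_i')_i\in V'$; note that $v_i'$ may depend on $\sjopt{i}$, hence on $v$, which is permitted by Definition~\ref{def:smooth-mechanism}.

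Next I would apply the hypothesized $(\lambda,\mu)$-smoothness of the mechanism for $V'$ to the profile $v'$: this yields deviation maps $b_i'$ with $\sum_i u_i(b_i'(b_i),b_{-i};v_i')\geq \lambda\, SW(\opt(\vp')) - \mu\sum_i P_i(b)$ for every $b$. I claim the very same maps witness $(\lambda/\beta,\mu)$-smoothness for $v$, and this rests on two observations corresponding to the two halves of the approximation. On the utility side, for any fixed bid profile the allocation $S_i(\cdot)$ and the payment $P_i(\cdot)$ are functions of the bids alone and are independent of the valuation; since $v_i\geq v_i'$ pointwise, replacing $v_i'$ by $v_i$ can only increase the utility, giving $u_i(b_i'(b_i),b_{-i};v_i)\geq u_i(b_i'(b_i),b_{-i};v_i')$ termwise (and in expectation under mixed bid plans). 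On the welfare side, the partition $(\sjopt{i})_i$ is feasible for the profile $v'$ as well, so $SW(\opt(\vp'))\geq \sum_i v_i'(\sjopt{i}) \geq \tfrac{1}{\beta}\sum_i v_i(\sjopt{i}) = \tfrac{1}{\beta}\, SW(\opt(\vp))$, where the middle inequality uses $\beta v_i'(\sjopt{i})\geq v_i(\sjopt{i})$.

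Chaining these, for every $b$,
\[
\sum_i u_i(b_i'(b_i),b_{-i};v_i)\;\geq\; \sum_i u_i(b_i'(b_i),b_{-i};v_i') \;\geq\; \lambda\, SW(\opt(\vp')) - \mu\sum_i P_i(b) \;\geq\; \frac{\lambda}{\beta}\, SW(\opt(\vp)) - \mu\sum_i P_i(b),
\]
which is exactly the $(\lambda/\beta,\mu)$-smoothness inequality for the arbitrary profile $v\in V$, completing the argument.

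I expect the only genuinely delicate point to be bookkeeping the asymmetry of the approximation: the lower bound $v_i\geq v_i'$ must hold \emph{everywhere} (this is what lets the deviation utilities transfer from the surrogate up to the true valuation), whereas the $\beta$-loss upper bound is needed only at the single optimal bundle $\sjopt{i}$ (this is what lets $\opt$ transfer down with a $1/\beta$ factor). Getting the direction of each inequality right, and observing that $SW(\opt(\vp'))$ dominates the welfare of the specific partition $(\sjopt{i})_i$ rather than being evaluated on some unrelated allocation, are the two spots where a careless argument could slip; everything else is direct substitution into Definition~\ref{def:smooth-mechanism}.
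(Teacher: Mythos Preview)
Your proposal is correct and follows essentially the same argument as the paper: pick the surrogate $v_i'\in V'$ at each player's optimal bundle $\sjopt{i}$, use $v_i\geq v_i'$ everywhere to transfer utilities upward, and use $\beta v_i'(\sjopt{i})\geq v_i(\sjopt{i})$ together with feasibility of $(\sjopt{i})_i$ under $v'$ to bound $SW(\opt(\vp'))\geq \tfrac{1}{\beta}SW(\opt(\vp))$. Your explicit remark that $SW(\opt(\vp'))$ dominates the welfare of the specific partition $(\sjopt{i})_i$ is in fact a step the paper leaves implicit.
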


\begin{proof}
  Consider a valuation profile $v$ where each valuation comes from
  valuation class $V$ . For each player $i$ let $S_i^*$ be her optimal
  allocation under $v$ and let $v^*$ be the valuation profile such
  that $v_i^*\in V'$ is the valuation that $\beta$-dominates $v_i$ for
  set $S_i^*$: i.e. $\beta\cdot v_i^*(S_i^*)\geq v_i(S_i)$ and for all
  $T\subseteq [m]$: $v_i(T)\geq v_i^*(T)$. By the first property we
  get that $\beta\cdot SW(\opt(v^*))\geq SW(\opt(v))$. By the second property
  we get that for all bid profiles $b$: $u_i(b; v_i)\geq u_i(b;
  v_i^*)$. Let $b_i':B_i\rightarrow B_i$ be the deviation mapping that
  is designated by the smoothness property of the mechanism under
  $v^*$. Then for any bid profile $b$:
\begin{align*}
  \sum_i u_i(b_i'(b_i),b_{-i};v_i)\geq~& \sum_i u_i(b_i'(b_i),b_{-i}; v_i^*)\geq \lambda SW(\opt(v^*))-\mu \sum_i P_i(b)\\
\geq~& \frac{\lambda}{\beta} SW(\opt(v))-\mu \sum_i P_i(b)
\end{align*}
which implies the mechanism is smooth for the valuation class $V$.
\end{proof}

\paragraph{Identical Items and Concave Symmetric Valuations.}

We first consider the case where all items are identical and players
have a valuation that is a concave function of the number of items
acquired, i.e., $v_i(S)=f_i(|S|)$ for some non-decreasing concave
function $f_i: \mathbb{N}\rightarrow \mathbb{R}^+$.  We call these as
{\em concave symmetric valuations}.  We show that all such valuations
can be pointwise $1$-approximated by constraint-homogeneous valuations.
As a corollary we get that the price of anarchy of draft auctions for
this case is at most $8$ (Theorem \ref{thm:symmetric}).

\begin{thm}\label{thm:multi-unit} 
  The class of concave symmetric valuations is pointwise 1-approximated
  by constraint-homogeneous valuations (proof in Appendix \ref{sec:omitted-smoothness}).
\end{thm}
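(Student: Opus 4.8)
The plan is to give an explicit construction: for a concave symmetric valuation $v(T)=f(|T|)$ with $f$ non-decreasing, concave, and $f(0)=0$, and for any target set $S$, I would take the constraint-homogeneous valuation $v'$ whose interest set is $S$ itself and whose per-unit value is the \emph{average} value $\hat v = f(|S|)/|S|$ (the case $S=\emptyset$ being trivial, since then both valuations vanish). This makes $v'(T)=\frac{f(|S|)}{|S|}\,|T\cap S|$. I first check the easy endpoint condition: at $T=S$ we have $v'(S)=\hat v\,|S|=f(|S|)=v(S)$, so the required upper bound $\beta v'(S)=v'(S)\ge v(S)$ holds with equality for $\beta=1$.

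The core of the argument is the lower-bound condition $v(T)\ge v'(T)$ for every $T$. Writing $t=|T|$ and $s=|T\cap S|\le\min\{t,|S|\}$, this is exactly the inequality $\frac{f(|S|)}{|S|}\,s\le f(t)$. I would reduce this to the standard fact that for a concave $f$ with $f(0)=0$ the per-unit average $n\mapsto f(n)/n$ is non-increasing. Granting that fact and using $s\le |S|$ (the case $s=0$ being trivial), I get $\frac{f(|S|)}{|S|}\le \frac{f(s)}{s}$, hence $\frac{f(|S|)}{|S|}\,s\le f(s)$; then monotonicity of $f$ together with $s\le t$ gives $f(s)\le f(t)=v(T)$, chaining to the desired bound.

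The one genuine step to justify is the monotonicity of the average $f(n)/n$. I would prove it by writing $f(n)=\sum_{\ell=1}^{n}\big(f(\ell)-f(\ell-1)\big)$, so that $f(n)/n$ is the arithmetic mean of the first $n$ marginal increments $\Delta(\ell)=f(\ell)-f(\ell-1)$; concavity makes the sequence $\Delta(\ell)$ non-increasing, and the running average of a non-increasing sequence is itself non-increasing. This is the only place where concavity (and the normalization $f(0)=0$) is actually used, and I expect it to be the main---though routine---obstacle; everything else is bookkeeping. Since the construction is genuinely pointwise, with the interest set $S$ and the constant $\hat v$ depending on the chosen $S$, it fits Definition \ref{def:approximation} with $\beta=1$. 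Combining this with the Extension Lemma and Corollary \ref{cor:smooth-constraint} then yields the $(\tfrac14,2)$-smoothness and the price-of-anarchy bound of $8$ asserted in Theorem \ref{thm:symmetric}.
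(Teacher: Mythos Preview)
Your proposal is correct and follows essentially the same approach as the paper: both take the interest set to be $S$ itself with per-unit value $\hat v = f(|S|)/|S|$, verify $v'(S)=v(S)$ directly, and use concavity (via the non-increasing average $f(n)/n$) together with monotonicity of $f$ to get $v'(T)\le f(|T\cap S|)\le f(|T|)=v(T)$. Your write-up is simply more explicit about the concavity step, which the paper compresses into a single line.
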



\paragraph{Heterogeneous items.}
We next turn to simplest class of valuations over heterogeneous items,
additive valuations. We show that the {Core Deviating Lemma} implies a
$O(\log(m))$ price of anarchy. The key technical step is showing that
any additive valuations can be pointwise approximated within a
logarithmic factor by a {constraint-homogeneous valuation}, via a
standard bucketing argument.

\hide{\nddelete{Specifically, we show that for any additive valuation
$v(\cdot)$ and for any set $S\subseteq [m]$, there exists a
constraint-homogeneous valuation $\hat{v}$ such that $v(S)\leq
O(\log(m))\cdot \hat{v}(S)$ and for any $T\subseteq [m]$:
$\hat{v}(T)\leq v(T)$.
The idea is that we can group items in $S$, such that the valuation of
a player for items within each group are within a factor of $2$ of
each other, grouping together all items with value less than a $1/m$
fraction of the highest valued item. Next we observe that there are
$O(\log(m))$ such groups and thereby one of these groups must yield a
total value to the bidder of at least a logarithmic fraction of her
initial valuation for the whole set. Moreover, this group cannot be
the last group of small items, since in total they are less valuable
than the highest valued item. Thus we can consider the constraint
homogeneous valuation whose target set is the most valuable among the
above groups. Treating the items within each group as identical and
with value equal to the minimum valued item in the group, can only
result in a constant factor difference from the initial valuation.}
}

\begin{lem} \label{lem:additive} 
Additive valuations can be pointwise $2(\log(m-1)+1)$-approximated
  by constraint-homogeneous valuations. (proof in Appendix \ref{sec:omitted-smoothness})
\end{lem}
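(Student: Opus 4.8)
The plan is to prove this by a standard geometric bucketing of the items of $S$ according to their per-item values, and then to use the heaviest bucket as the interest set of the constraint-homogeneous valuation. First I would fix an additive valuation $v(T)=\sum_{j\in T}v_j$ and an arbitrary set $S\subseteq[m]$, and set $v_{\max}=\max_{j\in S}v_j$. I would partition $S$ into \emph{buckets} by powers of two: for $k\ge 1$ let $C_k=\{j\in S: v_{\max}/2^{k}<v_j\le v_{\max}/2^{k-1}\}$, and collect the remaining very small items (those with $v_j\le v_{\max}/2^{\ell}$, for the cutoff $\ell=\ceil{\log_2(m-1)}$) into a single \emph{tail} bucket $C_0$. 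Every non-tail item lies in $(v_{\max}/2^{\ell},v_{\max}]$, so it is captured by one of $C_1,\dots,C_\ell$; hence there are $L=\ell+1=\ceil{\log_2(m-1)}+1$ buckets in all.

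Next I would run an averaging argument over all $L$ buckets, together with the observation that the tail can never be the heaviest bucket. Each tail item has value at most $v_{\max}/2^{\ell}\le v_{\max}/(m-1)$, and there are at most $|S|-1\le m-1$ of them (the maximizer lies in $C_1$, not in $C_0$), so the total weight of $C_0$ is at most $v_{\max}$. On the other hand $C_1$ contains the maximizer, so its weight is at least $v_{\max}$. Let $B^*$ be the heaviest bucket; if $B^*=C_0$ then, since $\sum_{j\in C_1}v_j\ge v_{\max}\ge \sum_{j\in C_0}v_j$, I may instead take $B^*=C_1$. Either way $B^*$ is a genuine factor-$2$ bucket, and since it is the heaviest among $L$ buckets whose weights sum to $v(S)$, it satisfies $\sum_{j\in B^*}v_j\ge v(S)/L$.

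Finally I would read off the constraint-homogeneous valuation: take interest set $B^*$ and per-unit value $\hat v=\min_{j\in B^*}v_j$, so that $v'(T)=\hat v\cdot|T\cap B^*|$. The global lower-bound requirement of Definition~\ref{def:approximation} is immediate, since $\hat v\le v_j$ for each $j\in B^*$ gives $v(T)=\sum_{j\in T}v_j\ge\sum_{j\in T\cap B^*}v_j\ge\hat v\,|T\cap B^*|=v'(T)$ for every $T$. For the approximation at $S$, note $B^*\subseteq S$ gives $v'(S)=\hat v\,|B^*|$, and because all values in the factor-$2$ bucket $B^*$ are within a factor of two of each other, $\sum_{j\in B^*}v_j\le 2\hat v\,|B^*|=2v'(S)$. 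Combining with the averaging bound yields $v'(S)\ge\tfrac12\sum_{j\in B^*}v_j\ge v(S)/(2L)$, i.e. $2L\cdot v'(S)\ge v(S)$ with $2L=2(\ceil{\log_2(m-1)}+1)$, which is the claimed factor $2(\log(m-1)+1)$.

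The hard part is arranging the two unavoidable factors of two so that they do not multiply. Bounding the tail's weight against the heaviest bucket costs one factor of two, and replacing a bucket by its minimum value costs another; done naively this gives $4(\log(m-1)+1)$. The trick that saves the extra factor is to include the tail as one of the $L$ buckets in the averaging and separately argue it is never heaviest, so the only factor-two loss is the within-bucket rounding. I would also take care that the per-unit value is at most every item value in $B^*$, which is precisely what makes $v'$ a global lower bound and keeps it inside the constraint-homogeneous class, and I would note the edge cases $|S|\le 1$ are trivial (a single item gives a $1$-approximation).
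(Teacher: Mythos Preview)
Your argument is correct and follows essentially the same route as the paper: geometric bucketing of the items in $S$ by powers of two, showing the tail bucket is dominated by the top bucket so the heaviest bucket is a genuine factor-$2$ bucket, then taking that bucket as the interest set with per-unit value equal to its minimum. The only cosmetic difference is that the paper buckets with respect to $k=|S|$ (obtaining the slightly sharper $2(\log(|S|-1)+1)$, which then implies the stated bound since $|S|\le m$), whereas you bucket directly with respect to $m$; both yield the claimed factor.
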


Combining the latter lemma with the smoothness of draft auctions for
constraint-homogeneous valuations we get the efficiency guarantee for
additive valuations.

\begin{cor}\label{thm:additive}
  The draft auction is a $(\frac{1}{8(\log(m-1)+1)},2)$-smooth mechanism for
  additive bidders, implying a price of anarchy of at most $16(\log(m-1)+1)$.
\end{cor}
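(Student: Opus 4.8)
The plan is to derive this corollary purely by chaining together three results already established in the paper, so that no new game-theoretic or combinatorial argument is required: the $(\frac{1}{4},2)$-smoothness of the draft auction for constraint-homogeneous valuations (Corollary \ref{cor:smooth-constraint}), the pointwise $2(\log(m-1)+1)$-approximation of additive valuations by constraint-homogeneous valuations (Lemma \ref{lem:additive}), and the Extension Lemma. All the genuine content lives in those earlier statements, in particular in the Core Deviation of Definition \ref{def:coredev} that underlies Corollary \ref{cor:smooth-constraint} and in the bucketing argument behind Lemma \ref{lem:additive}.

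Concretely, first I would instantiate the Extension Lemma with $V$ equal to the class of additive valuations and $V'$ equal to the class of constraint-homogeneous valuations. By Lemma \ref{lem:additive}, $V$ is pointwise $\beta$-approximated by $V'$ with $\beta = 2(\log(m-1)+1)$, and by Corollary \ref{cor:smooth-constraint} the draft auction is $(\lambda,\mu)$-smooth for $V'$ with $\lambda = \frac{1}{4}$ and $\mu = 2$. The Extension Lemma then immediately yields that the draft auction is $\left(\frac{\lambda}{\beta},\mu\right)$-smooth for additive valuations. Substituting the constants gives $\frac{\lambda}{\beta} = \frac{1/4}{2(\log(m-1)+1)} = \frac{1}{8(\log(m-1)+1)}$ and $\mu = 2$, which are exactly the claimed smoothness parameters.

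Finally, to convert smoothness into a price-of-anarchy guarantee I would invoke the smoothness-to-PoA theorem of \cite{Syrgkanis2013} stated just after Definition \ref{def:smooth-mechanism}: a $(\lambda,\mu)$-smooth mechanism has price of anarchy at most $\frac{\max\{1,\mu\}}{\lambda}$ for both mixed Bayes-Nash equilibria in the incomplete-information setting and correlated equilibria in the complete-information setting. Here $\mu = 2$, so $\max\{1,\mu\} = 2$, and $\lambda = \frac{1}{8(\log(m-1)+1)}$, which yields a bound of $2 \cdot 8(\log(m-1)+1) = 16(\log(m-1)+1)$. Because every step is a direct substitution into a previously proven statement, there is no real obstacle in this argument; the only point requiring care is the bookkeeping of constants, namely verifying that the factor $2$ arising from $\mu$ combines with the $8$ in the denominator of $\lambda$ to produce the stated $16(\log(m-1)+1)$ rather than being dropped.
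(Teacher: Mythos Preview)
Your proposal is correct and matches the paper's own argument exactly: the corollary is stated as following immediately from combining Lemma~\ref{lem:additive} with the smoothness for constraint-homogeneous valuations (Corollary~\ref{cor:smooth-constraint}) via the Extension Lemma, and then applying the smoothness-to-PoA theorem of \cite{Syrgkanis2013}. Your bookkeeping of the constants is also correct.
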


Additionally, by the definition of XOS valuations, it is easy to see
that they are pointwise $1$-approximated by additive valuations, in the
sense of Definition \ref{def:approximation}. Moreover, it is known
(see e.g. \cite{Bhawalkar2011}) that subadditive valuations can be pointwise
$H_m$-approximated by additive valuations, in the sense of Definition
\ref{def:approximation}. This leads to the following two corollaries.

\begin{cor}\label{cor:xos} The draft auction is a
  $(\frac{1}{8(\log(m-1)+1)},2)$-smooth mechanism for XOS valuations
  and it is a $\left(\frac{1}{8 H_m(\log(m-1)+1)},2\right)$-smooth mechanism for
  subadditive valuations.
\end{cor}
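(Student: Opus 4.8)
The plan is to derive both smoothness bounds as immediate consequences of the Extension Lemma, using the additive smoothness guarantee of Corollary~\ref{thm:additive} as the base case. Recall that Corollary~\ref{thm:additive} establishes that the draft auction is $\left(\frac{1}{8(\log(m-1)+1)},2\right)$-smooth for additive valuations, and that the Extension Lemma converts any $(\lambda,\mu)$-smoothness for a class $V'$ into $\left(\frac{\lambda}{\beta},\mu\right)$-smoothness for any class $V$ that is pointwise $\beta$-approximated by $V'$ in the sense of Definition~\ref{def:approximation}. Hence it suffices, for each of the two target classes, to exhibit a pointwise approximation by additive valuations with the appropriate factor $\beta$, and then read off the resulting parameters.

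First I would handle XOS valuations. For an XOS valuation $v(T)=\max_k \sum_{j\in T} v_{kj}$ and a fixed set $S$, I would let $k^*$ be a clause attaining the maximum on $S$ and set $v'(T)=\sum_{j\in T}v_{k^*j}$, which is additive and (crucially) may depend on $S$. Then $v'(S)=v(S)$, so $1\cdot v'(S)\geq v(S)$, while for every $T\subseteq [m]$ we have $v(T)=\max_k\sum_{j\in T}v_{kj}\geq \sum_{j\in T}v_{k^*j}=v'(T)$. This is exactly the pointwise $1$-approximation of Definition~\ref{def:approximation} with $\beta=1$. Applying the Extension Lemma with $\lambda=\frac{1}{8(\log(m-1)+1)}$, $\mu=2$, and $\beta=1$ then gives $\left(\frac{1}{8(\log(m-1)+1)},2\right)$-smoothness for XOS.

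For subadditive valuations I would invoke the known structural fact (\cite{Bhawalkar2011}) that every subadditive valuation is pointwise $H_m$-approximated by additive valuations: for any $v$ and any $S$ there exist per-item supporting prices whose induced additive valuation $v'$ satisfies $v(T)\geq v'(T)$ for all $T\subseteq[m]$ and $H_m\cdot v'(S)\geq v(S)$. Feeding this into the Extension Lemma with $\beta=H_m$ and the same $\lambda,\mu$ as above yields $\left(\frac{1}{8 H_m(\log(m-1)+1)},2\right)$-smoothness for the subadditive class, as claimed.

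The only substantive ingredient is the subadditive-to-additive approximation, which is the content of the cited supporting-prices (maximum-over-additive) theorem for complement-free valuations; I import it as a black box. The XOS case and both invocations of the Extension Lemma are purely mechanical. Consequently I do not expect a genuine obstacle inside this corollary itself: the real work was already carried out in establishing additive smoothness (Corollary~\ref{thm:additive}) via the Core Deviation Lemma and the bucketing argument of Lemma~\ref{lem:additive}, so here it remains only to compose those results with the two approximation facts.
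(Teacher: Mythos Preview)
Your proposal is correct and matches the paper's approach exactly: the paper derives the corollary precisely by noting that XOS valuations are pointwise $1$-approximated by additive valuations (via the maximizing clause) and that subadditive valuations are pointwise $H_m$-approximated by additive valuations (citing \cite{Bhawalkar2011}), then applying the Extension Lemma to Corollary~\ref{thm:additive}. Your write-up in fact supplies more detail than the paper, which states the two approximation facts in the text preceding the corollary and leaves the Extension Lemma invocation implicit.
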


This in turn implies that the price of anarchy of draft auctions is
$O(\log(m))$ for XOS valuations (Theorem \ref{thm:xos}) and $O(\log^2(m))$ for subadditive valuations (Theorem \ref{thm:subadditive}).

\section{Related Issues and Conclusion}

\paragraph{Instances of sequential item auctions.} Here we identify
natural candidates for implementing a draft auction in place of a
sequential item auction.  Sequential item auctions are used by auction
houses such as Sotheby's and Christie's, which auction off art,
jewelry, wine, etc.  The United States government auctions off a whole
bunch of seized and surplus merchandise ranging from electronics and
automobiles to industrial equipment and real estate \citep{usa}.
Another notable instance of a sequential item auction is the
auctioning of players to teams in a professional league such as the
{\em Indian Premiere League} \citep{hindu, youtube, cricinfo}.  We believe that
in many of these cases switching to a draft auction would be easy and
beneficial.  RTC auctions have already been used in some instances and
have been found to give a higher revenue, for example condo sales in
Miami \citep{ashenfelter} and selling water rights in Chile
\citep{alevy}.

\paragraph{Why first price?} 
The auction in each round of the draft auction is a sealed bid first
price auction.  Our results continue to hold for second price auctions
under an extra assumption of ``no overbidding".  Although overbidding
(i.e., bidding above one's valuation) seems unnatural and unhelpful,
one cannot easily rule out such strategies in a second price
auction. This makes analyzing second price auctions much harder, and
the no overbidding assumption has become a common way around this
difficulty \cite{}.  One exception is the unit-demand case where we
can show that overbidding is a dominated strategy.

Ascending price auctions hold additional difficulties since in this
case, each round is itself a sequential game. 
Due to this, we cannot hide our deviations until the very end and win a whole bunch of items once our 
deviation is apparent to others, like we do now. Even within a round, as soon as it becomes clear that 
we are deviating from the equilibrium, other players may change their behaviour before we can win the round. 
Nonetheless, we believe that our bounds should hold ``in
principle'' for these auctions as well, and resolving whether they do
is an interesting open question.


\paragraph{Why social welfare? }
We picked social welfare as the objective in this paper.  Social
welfare is probably the most common objective in the study of
combinatorial auctions, and is well motivated when the
auctioning authority is something like the government.  Folklore has
it that social welfare is also the ``right'' objective in the absence
of a monopoly, that is if similar items can be obtained by other
sellers as well.

Another natural objective is the revenue from the auction.
As mentioned in the section on related work, 
experimental results indicate that the revenue from draft auctions is
higher than other formats such as sequential item auctions on real world instances.  
Theoretical analysis of revenue seems more difficult as is evidenced by the dearth of such results. 
One difficulty is, unlike social welfare
which only depends on the allocation, the revenue depends on the
payments as well and therefore there is no clear benchmark for revenue
as an objective. 
We can answer simple questions about revenue, such
as ``Is the revenue from one auction instance-wise better than the
other?"  The answer is, no, for the complete information
case. Resolving this question for the Bayesian case for reasonable
distributions such as regular or monotone hazard rate distributions
and analysing the revenue of these auctions in general is also an
important direction for future research.

\paragraph{Sequential  vs. simultaneous auctions.}
Another simple auction is a simultaneous item auction, where bidders
submit sealed bids for all the items simultaneously and each item is
sold to the highest bidder. \citet{Feldman2013} have shown a constant price of
anarchy for simultaneous auctions for subadditive valuations, which
indicates that simultaneous auctions are better than sequential
auctions.  Sequential item auctions still seem to be quite commonly
preferred over simultaneous item auctions in practice, and we feel
draft auctions may be better suited than simultaneous item auctions
for many of these scenarios.  A better theoretical understanding of
the advantages of each and direct comparisons between the two would be
very valuable.

\paragraph{Other open problems.} Our work raises several open questions, the most intriguing one being
whether the price of anarchy for subadditive valuations is at most a constant. 
A constant upper bound on subclasses such as gross substitutes or submodular valuations would also be very interesting. 
Can we show any upper bounds for classes beyond subadditive valuations? 
A natural candidate is the class of valuations with restricted complements, introduced by \citet{Abraham2012}.




\bibliographystyle{plainnat}
\bibliography{simple-auctions}

\begin{appendix}

\section{Valuation Classes}\label{app:valuations}
\begin{mdef} \label{def:submodular} 
A monotone valuation function is {\bf submodular} if it exhibits the diminishing marginal value property, which to be precise is that 
\[ \forall~ S\subseteq T, \forall~ i \notin T, v(S \cup \set{i} ) - v(S) \geq 
v( T \cup \set{i} ) - v(T ) . \]
\end{mdef} 

\begin{mdef} \label{def:gs} 
Valuations with the {\bf gross substitutes} property are defined in terms of the corresponding {\em demand} function.  Given prices $p_j$ for all $j\in [m]$, the demand correspondence is 
\[ x(p_j)_{j\in [m]} := \arg \max_{S \subseteq [m]} \set{ v(S) - \sum_{j\in S} p_j }. \] 
A demand function satisfies gross-substitutes if increasing the price of one item does not decrease the 
demand for any other item.  If the demand function is a correspondence, then it satisfies the gross-substitute condition when the following holds: if an item $j$ is in some demand set under price $p=(p_1,\ldots, p_m)$, then after increasing the price of item $j$ and keeping the rest of the prices the same,  there exists a demand set under the new prices that contains $j$.
\end{mdef}

\section{Definition of Correlated Equilibrium} 

\begin{mdef} {\bf Correlated equilibrium} \label{def:correlated} A
  correlated equilibrium is a distribution $X$ over joint strategy
  profiles such that, for each player $i$, following the suggestion
  $s_i$ drawn from the distribution $X$ is a best-response, in
  expectation over the suggestions $s_{-i}$, not known to $i$ and
  assuming everyone else plays according to their suggestion:

\[\mathbb{E}_{\s_{-i},\v}[u_i(\s(\v)) \mid s_i] \geq \mathbb{E}_{s_{-i},\v}[u_i\left(s'_i(v_i), \s_{-i}(\v_{-i})\right) \mid s_i]\]

Note that the deviation is allowed to depend on the suggestion (in the
event that $s'_i$ is required to be independent of $s_i$ for all $i$,
we call $s$ a coarse correlated equilibrium).
\end{mdef}

\section{Omitted Proofs}
\hide{We give a slightly tighter version of Lemma \ref{lem:multi-unit}.

\begin{lem}
Consider a subset of items $S$, that are identical from the perspective of a player $i$: i.e. for all $T\subseteq S: v_i(T)=v_i(|T|)$ and 
suppose that player $i$ has a monotone valuation over the whole set of items $[m]$. Then in a draft auction there exists a deviation mapping $b_i':B_i\rightarrow B_i$ such that, for any strategy profile $b$:
$$u_i(b_i'(b_i),b_{-i};v_i)\geq \beta\frac{1}{2} \left(1-\frac{1}{e^{1/\beta}}\right)v_i\left(|S|\right)-\beta\sum_{j\in S}p_j-P_i(b)$$
for any $\beta\geq 0$.
\end{lem}
\begin{proof}
Consider a bid profile $b$. Let $S$ be a homogeneous set of items for player $i$ as described in the theorem. We will refer to the items in $S$ as units
as opposed to items, which will be any item not in $S$. Let $p_j$ be the price that unit $j$ was sold under strategy profile $b$. Also re-order the index of the units of $S$ such that $p_1\leq p_2\leq \ldots \leq p_{|S|}$. This is not necessarily the order at which they were sold. Let $t(j)$ be the auction at which item $j\in S$ was sold and we denote with $P_t$ the price of auction $t$ under bid profile $b$.

We also make the extra observation that the number of units that a player acquires at auction $t$, subject to winning, do not depend on the price of the auction since it is completely defined by the bid of the player. Additionally, a player doesn't learn anything else after the bidding phase. 
We will denote with $k_{it}$ the number of units of $S$ that player $i$ was obtaining at auction $t$ under bid profile $b$ subject to winning.

Consider the following deviation $b_i'$ of player $i$: let $b_i^*$ be a random bid generated by distribution with density
$f(t)=\frac{\beta}{v(|S|)/|S|-t}$ and support $\left[0,\left(1-\frac{1}{e^{1/\beta}}\right)\frac{v(|S|)}{|S|}\right]$. At every auction $t$, he submits $b_{it}'=\max\left\{b_i^*, b_{it}\right\}$. If he wins with bid $b_i^*$ then he buys $|S|/2$ minus the number of units of $S$ he has already acquired and drops out.  If he wins with a bid of $b_{it}$ he buys $k_{it}$ units 
of $S$ together with any other item that he was buying under strategy profile $b_i$ at auction $t$ and continues to bid until he acquires $|S|/2$ units of $S$ or the auctions finish. 

The crucial observation is that as long as a player hasn't already acquired $|S|/2$ units then he is not affecting the game path created by strategy profile $b$ in any way, since from the perspective of the other bidders either he behaved exactly as under $b_i$, by winning at his price under $b_i$ and getting the items he would have got under $b$ or he won at a higher price, but then he acquired all the units needed to reach $|S|/2$ at the auction where he bid higher in the deviation. Thus the prices that he faces in all the auctions prior to having won $|S|/2$ units are the same as the prices under strategy profile $b$.

We will analyze the expected utility of a player for a fixed random draw $b_i^*$ and at the end take expectation over 
$b_i^*$ to get the lower bound of the expected utility under the latter deviation.

First it is easy to observe that if player $i$ was obtaining at least $|S|/2$ units under $b_i$ then he is definitely winning $|S|/2$ units under $b_i'$. Moreover he is paying at most $\frac{|S|}{2} b_i^* + P_i(b)$. Thus under this condition we get that
the utility from the deviation is at least:
$$u_i(b_i',b_{-i})\geq v_i\left(\frac{|S|}{2}\right)-\frac{|S|}{2}b_i^*-P_i(b)$$ 

Thus we can assume that under $b_i$ player $i$ is winning less than $|S|/2$ units. Let $t^*$ be the first auction that was won at price $P_{t^*}\leq p_{|S|/2}$ under $b$ and not was not won by bidder $i$. We know that such an auction must exist since there are $|S|/2$ units that are sold at a price
at most $p_{|S|/2}$ and by the assumption on the limit of units  won by player $i$ under $b_i$, some unit of them was not won by player $i$. Thus one candidate auction for $t^*$ is the auction that this unit was sold at. 

If player $i$ has already won $|S|/2$ units prior to $t^*$ then the same lower bound of $v_i\left(\frac{|S|}{2}\right)-\frac{|S|}{2}b_i^*-P_i(b)$ on the deviating utility holds. Thus we can assume that the player has not won $|S|/2$ units prior to $t^*$ and hence he is still bidding. Let $k_{i,<t^*}$ be the number of units acquired by player $i$ prior to $t^*$. Observe that by the definition of $t^*$ there are at least $|S|/2-k_{i,<t^*}$ units available
for sale at that auction. This follows from the fact that for any auction $t<t^*$ for which $P_t\leq p_{|S|/2}$, we know that player $i$ was winning under $b_i$. Thus every unit that was sold prior to $t^*$ at a price of less than or equal to $p_{|S|/2}$ was sold to player $i$. There are $|S|/2$ units sold at a price $\leq p_{|S|/2}$ and the number of such units sold prior to $t^*$ is at most the number of total units allocated to bidder $i$ prior to $t^*$. Thus
the number of available units available at $t^*$ is at least: $|S|/2-k_{i,<t^*}$.

If $P_{t^*}<b_i^*$, then player $i$ is winning auction $t^*$. Since he was not winning $t^*$ under $b_i$, it must be that he is winning 
$t^*$ with a bid of $b_i^*$. Thus at that auction he will buy every unit needed to reach $|S|/2$ units. By the analysis in the previous paragraph,
we know that there are still enough units available for sale to reach $|S|/2$. Thus in this case he will win $|S|/2$ items and pay at most $\frac{|S|}{2} b_i^* + P_i(b)$. Therefore, we get that his utility from the deviation is also at least: $v_i\left(\frac{|S|}{2}\right)-\frac{|S|}{2}b_i^*-P_i(b)$.

If $P_{t^*}\geq b_i^*$ then we use the crude lower bound, that a player's utility from this deviation is at 
least $-P_i(b)$, since in the worst case the player got nothing and paid what he was paying at equilibrium (if at any point he
paid something that he was not paying at equilibrium then it is because he managed to acquire $|S|/2$ units, making his utility non-negative). Therefore, we get that for any fixed draw $b_i^*$ the utility is lower bounded by:
\begin{equation}
\left(v_i\left(\frac{|S|}{2}\right)-\frac{|S|}{2}b_i^*\right)\cdot \mathrm{1}_{P_{t^*}<b_i}-P_i(b)
\end{equation}

Taking expectation over $b_i^*$ we get that the expected utility of the deviation is at least:
\begin{align*}
u_i(b_i',b_{-i})\geq~& \int_{P_{t^*}}^{\left(1-\frac{1}{e^{1/\beta}}\right)\frac{v(|S|)}{|S|}}\left(v_i\left(\frac{|S|}{2}\right)-\frac{|S|}{2}t\right) f(t)dt - P_i(b)\\
\geq~& \frac{|S|}{2} \int_{P_{t^*}}^{\left(1-\frac{1}{e^{1/\beta}}\right)\frac{v(|S|)}{|S|}}\left(\frac{v_i\left(|S|\right)}{|S|}-t\right) f(t)dt - P_i(b)\\
\geq~&   \beta \left(1-\frac{1}{e^{1/\beta}}\right)\frac{v(|S|)}{2}-\beta \frac{|S|}{2}P_{t^*}-P_i(b)\\
\geq~& \beta \left(1-\frac{1}{e^{1/\beta}}\right)\frac{v(|S|)}{2}-\beta \frac{|S|}{2}P_{|S|/2}-P_i(b)
\end{align*}
Where we also used that by concavity of the valuation $v_i(|S|/2)\geq v_i(|S|)/2$.

Now we can use the simple fact that $\frac{|S|}{2}p_{|S|/2}\leq \sum_{j=|S|/2}^{|S|}p_j\leq \sum_{j=1}^{|S|}p_j$. Thus we get:
$$u_i(b_i',b_{-i})\geq \beta\frac{1}{2} \left(1-\frac{1}{e^{1/\beta}}\right)v_i\left(|S|\right)-\beta\sum_{j=1}^{|S|}p_j-P_i(b)= \beta\frac{1}{2} \left(1-\frac{1}{e^{1/\beta}}\right)v_i\left(|S|\right)-\beta\sum_{j\in S}p_j-P_i(b)$$
\end{proof}}

\subsection{Proofs from Section \ref{sec:smoothness}}\label{sec:omitted-smoothness}

\begin{proofof}{Theorem \ref{thm:multi-unit}}

  Consider a valuation profile $v$ as described in the theorem
  (i.e. $v_i(S)=f_i(|S|)$).  Consider a set $S\subseteq [m]$ and let
  $v_i'$ be the constraint-homogeneous valuation with interest set $S$
  and per-unit valuation $\hat{v}_i' =
  \frac{f_i(|S_i^*|)}{|S_i^*|}$. By concavity of the valuation $v_i$
  we have that for any $T\subseteq [m]$:
 \begin{equation}
v_i'(T)= \hat{v}_i'\cdot |T\cap S_i^*|= \frac{f_i\left(|S_i^*|\right)}{|S_i^*|}\cdot |T\cap S_i^*|\leq f_i\left(|T\cap S_i^*|\right) \leq v_i(T)
\end{equation}
Additionally, $v_i'(S_i^*)=f_i(|S_i^*|)=v_i(S_i^*)$.

\end{proofof}

\begin{proofof}{Corollary \ref{cor:smooth-constraint}}
Consider a constraint-homogeneous valuation profile $v$ and a bid
  profile $b$. Let $S_i^*$ be the units allocated to player $i$ in the
  optimal allocation for profile $v$. Also let $S_i$ be the interest
  set of each player and $\hat{v}_i$ his per-unit value. Consider the
  alternative valuation profile where each player $i$ has a
  constraint-homogeneous valuation $v_i'$ with interest set
  $S_i'=S_i\cap S_i^*$ and per unit value $\hat{v}_i'=\hat{v}_i$.

Observe that for any $T\subseteq [m]$, $v_i(T)\geq v_i'(T)$ and
$v_i(S_i^*)=v_i'(S_i^*)$. Thus, for any bid profile $b$:
$u_i(b;v_i)\geq u_i(b;v_i')$ and $SW(\opt(v'))\geq SW(\opt(v))$. Invoking
Lemma \ref{lem:multi-unit} on valuations $v_i'$, we get that there
exists a deviation mapping $b_i':B_i\rightarrow B_i$ for each player
$i$ such that for any strategy profile $b$:
\begin{equation*}
  \sum_i u_i(b_i'(b_i),b_{-i};v_i)\geq  \sum_i u_i(b_i'(b_i),b_{-i};v_i') \geq \frac{1}{4}\opt(v')-2\sum_i P_i(b) \geq \frac{1}{4} \opt(v) - 2\sum_i P_i(b),
\end{equation*}
where we have once again used the fact that $\sum_i
p_{j_i^*}(b)=\sum_{j\in [m]}p_j(b)=\sum_i P_i(b)$.
%
%
\end{proofof}

\begin{proofof}{Lemma \ref{lem:additive}}
  Consider an additive valuation $v$, i.e $v(T)=\sum_{j\in T}v_{j}$. Let $S$ be a set of $k$ items and sort the items in $S$ in 
decreasing order of value: $v_{1} \geq v_{2} \geq \ldots \geq
  v_{k}$. Consider the partition of items $\mathcal{P}$ where $I_1
  = \left\{j : v_{j} \geq \frac{v_{1}}{2}\right\}$, and more
  generally, for any $t\in \left[2, \log(k-1)\right]$

 \[I_t = \left\{j \mid \frac{v_{1}}{2^{t-1}} > v_{j} \geq
     \frac{v_{1}}{2^t}\right\}.\]

   Let the final set $I_f$ contain all the smallest items, $I_f =
   \left\{j : v_{j} < \frac{v_{1}}{k-1}\right\}$. Notice that the
   largest-valued item in $I_f$ has value at most $\frac{v_{1}}{k-1}$
   and there are at most $k-1$ items in $I_f$, thus, $v_i(I_f) <
   v_{1}$, and so $v_i(I_1) > v_i(I_f)$.  There are $\log(k-1)+1$
   sets in $\mathcal{P}$, so the largest valued one has value at least
   $\frac{v(S)}{\log(k-1)+1}$.  It cannot be  $I_f$ so it is one of the first $\log(k-1)$ sets.
Thus if we denote with $\tau=\arg\max_{t\in [1,\ldots,f-1]}v_i(I_t)$ we get that:
\begin{equation}
v(I_{\tau}) \geq \frac{v(S)}{\log(k-1)+1}
\end{equation}
Now consider the constraint-homogeneous valuation $v_i'$ with interest set $I_{\tau}$ and
$\hat{v}'=\min_{j\in I_\tau}v_{j}$. It is obvious that for any set $T\subseteq [m]$: $v(T)\geq v'(T)$,
since an element's valuation was either set to zero or decreased under $v'$. Additionally, since items in $I_\tau$ only differ
by a factor of $2$, we also get that $v'(S)=\hat{v}\cdot|I_\tau|\geq
\frac{v(I_\tau)}{2} \geq \frac{v(S)}{2(\log(k-1)+1)}$.
\end{proofof}

\subsection{Existence of Pure SPE for Single-Item Draft Auctions}
\begin{obs}\label{obs:existence}
  Single draft auctions always have pure subgame perfect equilibria,
  where bidders do not use weakly dominated strategies, for bidders
  who have arbitrary valuations.
\end{obs}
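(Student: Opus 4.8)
The plan is to argue by backward induction on the number of remaining items, treating each round as a one-shot sealed-bid first-price auction whose payoffs are the equilibrium continuation utilities of the shorter subgames. This is the same scheme used for sequential item auctions; the single-item draft auction has exactly $m$ rounds and a finite-depth game tree, so the induction is well-founded. I first record the state of a subgame as a pair $(I,(S_i)_{i\in[n]})$, where $I$ is the set of still-available items and $S_i$ is the set of items already won by $i$. By quasi-linearity the payments made so far are sunk and do not affect continuation play, so a subgame is determined by this state alone. The induction hypothesis is that for every state with $|I|=k-1$ there is a pure subgame-perfect equilibrium in undominated strategies, and that it assigns to each player $i$ a well-defined continuation utility $W_i(I,(S_i))$ equal to her future value minus her future payments.

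For the inductive step consider a state $\sigma$ with $|I|=k$. Conditioned on the continuation values $W_i$ of the $|I|=k-1$ subgames, the current round is a single first-price auction with allocative externalities: if $i$ wins, takes item $j$, and pays $b$, her continuation utility is $W_i(\text{state after }i\text{ takes }j)-b$, whereas a losing player's continuation utility depends on which item the winner removed. The crucial simplification is that the winner's choice of item is \emph{independent of the price she pays}, since the round's payment $b$ enters additively: the winner always selects $j_i^{\dagger}\in\arg\max_{j\in I}W_i(\text{state after }i\text{ takes }j)$, breaking ties by a fixed rule on items. Hence, once the identity of the round's winner $w$ is fixed, the removed item $j_w^{\dagger}$, and therefore every player's continuation state, is pinned down; each player $i\neq w$ faces a well-defined losing value $L_i^{(w)}$, and each player $i$ has a winning value $A_i=\max_{j\in I}W_i(\text{state after }i\text{ takes }j)$.

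It then remains to exhibit a pure Nash equilibrium in undominated strategies of this one-shot first-price auction with winner-dependent losing values. As in the prior analysis of sequential first-price auctions, I would use the standard device of permitting bids of the form $b+$ together with a fixed tie-breaking order, so that best responses exist despite the open-set problem inherent to first-price auctions. The candidate equilibrium designates a winner $w$ who bids just enough to outbid her strongest challenger, i.e.\ at price $p=\max_{i\neq w}(A_i-L_i^{(w)})$; undominatedness is exactly the requirement that no player ever bids above her net willingness-to-pay $A_i-L_i^{(w)}$ to win rather than let $w$ win, and the equilibrium conditions are that every $i\neq w$ weakly prefers losing at price $p$ to outbidding while $w$ weakly prefers winning at $p$ to conceding to the runner-up $r$, namely $A_w-p\geq L_w^{(r)}$. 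Concatenating, over all $m$ rounds, the undominated stage equilibria produced by the induction yields a pure subgame-perfect equilibrium in which no weakly dominated strategy is ever used.

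The main obstacle is establishing the one-shot equilibrium in the presence of the externality $L_i^{(w)}$: unlike a plain private-value first-price auction, a loser's payoff depends on \emph{who} wins, so one must show that some player $w$ can be selected whose strongest challenger does not strictly prefer to take over the win, a selection argument of fixed-point flavor over the finite player set. This is the only place where the draft-auction analysis goes beyond a verbatim quotation of the sequential-item-auction result; the extra item-choice dimension is itself harmless precisely because, as noted above, the winner's item selection is price-independent and can therefore be resolved before the bidding is analyzed.
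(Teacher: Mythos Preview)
Your approach is essentially identical to the paper's: backward induction over rounds, with each round reduced to a one-shot first-price auction with externalities, and the key observation that in the single-item draft the winner's item choice is price-independent so that externalities depend only on the winner's identity.

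The one difference is how the ``main obstacle'' you flag is handled. You sketch a direct selection argument for a pure Nash equilibrium in undominated strategies of the one-shot auction with winner-dependent losing values, but leave it at the level of a plausibility outline. The paper does not prove this step from scratch; it simply invokes a known result (Theorem~2.1 of \cite{Syrgkanis2012}) stating that every first-price single-item auction with externalities admits a pure Nash equilibrium in undominated strategies. Once the price-independence of the winner's pick is noted, each round is literally an instance of that theorem, so the obstacle you identify is already resolved in the literature rather than being a new hurdle specific to draft auctions.
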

\begin{proof}
  By Theorem 2.1 of previous work~\cite{Syrgkanis2012}, every first-price
  single-item auction with externalities has a pure Nash equilibrium
  which doesn't use dominated strategies. Single draft auctions can be
  thought of as single auctions with externalities: bidder $i$ has an
  associated value in the remaining game as a function of the player
  $j$ who wins the current auction (since $j$ has a well-defined best
  item to take once he wins). Thus, one can construct a subgame
  perfect equilibrium by backwards induction. The final auction has no
  externalities. As a function of who wins the $k$th auction and what
  they take, there is a well-defined value each bidder has for the
  remaining auctions. Those define the externalities for the $k$th
  auction.
\end{proof}

\begin{obs}
  The above proof relies on the single selection process: if a player
  can select more than one item, the set she chooses (and thus her
  externalities for winning, and other's externalities for her
  winning) change as a function of the price at which she wins.
\end{obs}

\subsection{The Pure PoA for unit-demand bidders is at most $2$}
We give a tighter upper bound on the price of anarchy for pure Nash
equilibria, which was the second part of Theorem
\ref{thm:unit-demand}.
\begin{obs}
  The pure-Nash PoA is 2 for unit-demand bidders.
\end{obs}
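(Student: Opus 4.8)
The plan is to show $SW(\opt(\vp)) \le 2\,SW(b)$ at every pure Nash equilibrium in the complete information setting, by exhibiting for each player one deviation, applying the equilibrium condition, and summing. Fix a pure Nash equilibrium bid plan $b$; let $S_i = S_i(b)$ be the set $i$ wins, $P_i = P_i(b)$ her payment, and let $j_i^*$ be the item matched to $i$ in the optimal assignment, so that the $j_i^*$ are distinct items. Let $p_{j_i^*} = p_{j_i^*}(b)$ be the price at which $j_i^*$ is sold under $b$, and let $t^*$ be the (deterministic, hence known) round in which that sale occurs. We may assume $i$ does not herself win $j_i^*$ under $b$, since otherwise $v_i(S_i) \ge v_{i,j_i^*}$ holds outright.

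The central idea exploits exactly the feature that the smoothness framework forbids (see the footnote in Section~\ref{sec:smoothness}): in the pure, complete-information setting player $i$ knows the round $t^*$, so she may ``play along'' and then pounce. Consider the deviation $b_i'$ that imitates $b_i$ verbatim in every round before $t^*$, then in round $t^*$ bids $p_{j_i^*}+\epsilon$ (outbidding the equilibrium winner of that round), takes the single item $j_i^*$, and drops out; letting $\epsilon \to 0$ the price paid tends to $p_{j_i^*}$. Because the other players' strategies are fixed and deterministic and $i$ acts identically up to $t^*$, the entire observed history through round $t^*$ is unchanged, so $j_i^*$ is still available and still priced at $p_{j_i^*}$, and $i$ indeed acquires it. Writing $S_i^{<t^*}$ and $P_i^{<t^*}\le P_i$ for what $i$ accrues before $t^*$, unit-demand gives $v_i(S_i^{<t^*}\cup\{j_i^*\}) \ge v_{i,j_i^*}$, so the Nash inequality $u_i(b) \ge u_i(b_i',b_{-i})$ becomes
\[
v_i(S_i) - P_i \;\ge\; v_i(S_i^{<t^*}\cup\{j_i^*\}) - P_i^{<t^*} - p_{j_i^*} \;\ge\; v_{i,j_i^*} - P_i - p_{j_i^*},
\]
and cancelling $P_i$ yields the per-player bound $v_i(S_i) \ge v_{i,j_i^*} - p_{j_i^*}$.

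Summing over all players, and using that the $j_i^*$ are distinct, gives $SW(b) \ge SW(\opt(\vp)) - \sum_i p_{j_i^*} \ge SW(\opt(\vp)) - \sum_{j} p_j(b)$. The revenue identity $\sum_j p_j(b) = \sum_i P_i$ together with individual rationality ($P_i \le v_i(S_i)$, since bidding $0$ forever guarantees nonnegative utility) gives $\sum_j p_j(b) \le \sum_i v_i(S_i) = SW(b)$. Combining, $SW(b) \ge SW(\opt(\vp)) - SW(b)$, i.e. $SW(\opt(\vp)) \le 2\,SW(b)$, so the pure-Nash price of anarchy is at most $2$.

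The main obstacle is not the arithmetic but certifying that the deviation is structurally sound: I must argue that replaying $b_i$ exactly up to $t^*$ genuinely preserves the sale of $j_i^*$ (which holds precisely because all other strategies are deterministic and unaffected), and I must dispose of two degenerate cases. When $i$ already wins $j_i^*$ under $b$, the inequality is immediate; when $j_i^*$ is never sold under $b$, I set $p_{j_i^*}=0$ and note that $i$ can append $j_i^*$ to a bundle she already wins, or win it at a terminal round, so that she can still secure value $v_{i,j_i^*}$ at no relevant additional cost. In every case the per-player inequality $v_i(S_i) \ge v_{i,j_i^*} - p_{j_i^*}$ holds (indeed only strengthens), so the bound of $2$ is unaffected.
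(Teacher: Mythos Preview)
Your proof is correct and follows essentially the same approach as the paper: both establish the per-player inequality $v_i(S_i) \ge v_{i,j_i^*} - p_{j_i^*}(b)$ via a best-response deviation at the round $j_i^*$ is sold, then sum over players and finish with individual rationality to get $2\,SW(b)\ge SW(\opt(\vp))$. The paper's version is terser---it splits into two cases according to whether $i$ wins a round before or after $j_i^*$ is sold---while your unified ``play along then pounce at $t^*$'' deviation is more explicit and handles the edge cases the paper glosses over, but the underlying argument is the same.
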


\begin{proof}
  Consider an agent $i$ who gets the item $\jopt{i}$ in $OPT$. There are
  2 cases: that $i$ wins in a round where $\jopt{i}$ has yet to be sold,
  and where $i$ wins a round after $\jopt{i}$ has been sold. In case $1$,
  $v_{i,j(i)} \geq \valopt{i}$ (since $i$ has the choice to take
  $\jopt{i}$). In case 2, $v_{i,j(i)} \geq \valopt{i} -
  p(\jopt{i})$. Summing up over all players, we have $SW_{EQ} \geq
  SW_{OPT} - REV_{EQ}$. Thus, since $REV_{EQ} \geq SW_{EQ}$, by
  individual rationality, $2SW_{EQ} \geq SW_{OPT}$.
\end{proof}

\subsection{Inefficiency:  Proof of Theorem ~\ref{thm:lowerbounds}}

In this section, we show that draft auctions for unit demand bidders
may be inefficient, even when all players agree on the relative
ordering of the items by their value. We also show that the pure price
of anarchy is between $2$ and $209/177 > 1.22$ for unrestricted
unit-demand bidders. Here, we present an example of the inefficiency
which arises from the competition between agents.

The inefficiency of arbitrary subgame perfect equilibria suggests a
new conjecture: is the price of stability $1$? Our example also shows
the price of stability is strictly larger than $1$, even when all
bidders share the same ranking of the items by value. Indeed, the
example we give below also shows a setting \emph{in which no
  equilibrium is optimal}.

\begin{lem}\label{lem:stability}
  The price of stability for unit-demand bidders in draft auctions is
  strictly larger than $1$, even when all bidders share the same
  ordering of items by value. This implies that the price of anarchy
  is also greater than $1$.
\end{lem}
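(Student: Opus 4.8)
The plan is to exhibit a single explicit instance with unit-demand bidders, all sharing a common strict ranking of the items by value, for which no equilibrium of the draft auction realizes the welfare-optimal matching. Since the price of stability concerns the \emph{best} equilibrium, one such instance suffices: if even the best equilibrium is bounded away from the optimum then $\mathrm{PoS}>1$, and $\mathrm{PoA}\ge \mathrm{PoS}>1$ follows immediately. The second sentence of the lemma (that the price of anarchy is also greater than $1$) is then free.

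The first step is a structural reduction. When all bidders rank the items identically (say $v_{i1}\ge v_{i2}\ge\cdots$ for every $i$), then after eliminating weakly dominated continuation choices the winner of any round always takes the most valuable available item: for a unit-demand bidder, buying her top remaining item weakly dominates buying any other item or bundle at the same per-item price. Hence along every subgame-perfect path the items are sold in decreasing order of their common value, and the draft auction collapses to a sequential first-price auction with this fixed item order. I would then analyze this sequential game by backward induction: in the last round the unique remaining bidder takes the last item essentially for free, and in each earlier round the active bidders play a first-price auction whose value for winning item $j$ is $v_{ij}$ minus bidder $i$'s continuation utility in the subgame she enters upon losing. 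I would compute the optimal matching directly and identify which bidder must win each round for that matching to be realized.

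The crux is to show that this required sequence of winners cannot arise at equilibrium. For the efficient allocation to be sustained, the bidder designated to win an early round for item $j$ must win at a price high enough to deter every competitor, yet low enough that she still prefers winning $j$ to dropping out and collecting a cheaper item later. I would tune the values so that these two requirements are incompatible: the deterring price (the second-highest continuation-adjusted value) strictly exceeds the designated winner's own net value, so she strictly prefers to lose and grab a bargain, while the bidder who \emph{does} hold the highest continuation-adjusted value produces a strictly suboptimal matching when he wins. This amounts to breaking the ``alignment'' between the deterrence inequalities and welfare maximization that normally renders such sequential auctions efficient, so the instance must be calibrated so that a single pricing externality tips exactly one deterrence constraint against efficiency.

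The main obstacle is completeness. Because the price of stability looks at the best equilibrium, I must rule out \emph{every} equilibrium, not merely the ``natural'' backward-induction one: concretely, I would enumerate all assignments of winners to rounds, and for each efficient candidate assignment produce an explicit profitable deviation (a bidder who prefers to shade down and take a later item, or one who prefers to outbid), while verifying that at least one \emph{inefficient} assignment survives as a genuine equilibrium. Care is needed to handle ties in the first-price rounds, where the tie-breaking and the ``$b^{+}$'' convention create a continuum of equilibria, and to confirm that no favorable tie-break or mixed strategy recovers the optimum. This case analysis, together with the endogeneity of the continuation values, is where essentially all the difficulty lies; the reduction to a sequential first-price auction in decreasing value order is what makes it tractable.
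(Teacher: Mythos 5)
Your overall strategy is exactly the paper's: exhibit one explicit unit-demand instance with a common item ranking in which the bidder designated to receive the top item in the efficient matching must, in order to deter the runner-up, pay a price so close to her value that she strictly prefers to throw the first round and collect a cheap item later. You have correctly identified the mechanism of inefficiency, and your observation that a common ranking forces items to sell in decreasing order of value (so the draft auction collapses to a sequential first-price auction amenable to backward induction on continuation values) is also how the paper reasons.

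The gap is that the lemma is an existence claim and your proof never produces the witness. Everything hinges on the sentence ``I would tune the values so that these two requirements are incompatible,'' but whether such a tuning exists is precisely the content of the lemma, and it is not automatic: with too few items and bidders the deterrence price and the designated winner's outside option are tightly coupled, and one needs enough structure to pull them apart. The paper's instance uses four bidders $a,b,c,d$ and four items $A,B,C,D$ with optimal matching $(a,A),(b,B),(c,C),(d,D)$ of total value $3+\epsilon$: to sustain this, $a$ must win the first round at the deterrence price $1-2\epsilon$ set by $b$, netting only $2\epsilon$, whereas if she lets $b$ take $A$ the continuation gives her item $C$ at price $0$, worth $1-2\epsilon$ --- a gain of $1-4\epsilon$, so she deviates from every efficient profile. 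Without such a concrete instance and its verification (including the enumeration over candidate efficient equilibria and tie-breaking conventions that you rightly flag as the bulk of the work), the argument remains a plan rather than a proof.
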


\begin{figure}
\centering
\input{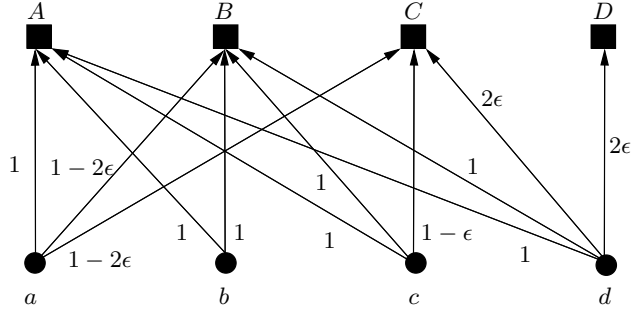}
\caption{An example of inefficiency in draft auctions for unit-demand bidders}\label{fig:inefficient}
\end{figure}

\begin{proof}
The example shown in Figure~\ref{fig:inefficient} has no equilibrium
where the optimal allocation is given. The optimal allocation is
$(a,A), (b,B), (c,C), (d,D)$, with a total weight of
$3+\epsilon$. Since the agents have the same ordering on items, item
$a$ will be selected first. So, for the optimal allocation to be an
equilibrium, $a$ must win the first round.  Then, assuming in the
subgame where $(a,A)$ is removed that the optimal allocation is
reached and solving for prices, the price vector for such an
allocation is $(1-2\epsilon, 1-2\epsilon, 0,0)$, where player $b$
price-sets in the first round and player $d$ price-sets in the second round\footnote{ We can ensure $b$ is the price-setter for player $a$ by
  slightly increasing the weight on the edges $(b,A)$. That way, $b$
  will have a slightly higher incentive to beat $a$ than $d$ has.}.
Thus, player $a$ gets utility $2\epsilon$ from winning the first
round.

Now, consider what happens if $a$ loses the first round to player $b$. If we
assume in any subgame, the optimal allocation is the one which is
made, the allocation in the subgame will be $(a,C), (c,B),
(d,D)$. Then, $c$ will pay $1-2\epsilon$, and $a$ will pay nothing,
for utility $1-2\epsilon$. So, if $a$ chooses to lose to player $b$,
her utility is $1-4\epsilon$ higher than if she won the first round.

\end{proof}

\begin{lem}\label{lem:lower}
  The pure price of anarchy of draft auctions for unit-demand bidders
  is at least 1.22.
\end{lem}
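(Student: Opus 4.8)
The plan is to establish the lower bound by exhibiting a single instance of unit-demand bidders that admits a pure (subgame-perfect) Nash equilibrium whose social welfare is at most a $1/1.22$ fraction of the optimum. Since every subgame-perfect equilibrium is in particular a pure Nash equilibrium, and since Observation~\ref{obs:existence} guarantees such equilibria exist for single-item draft auctions, it suffices to construct the instance, pin down one bad equilibrium, compute the two welfares, and take their ratio. I would build on the bipartite example of Figure~\ref{fig:inefficient} used in Lemma~\ref{lem:stability}, assigning concrete unit-demand values $v_{ij}$ to each bidder--item pair and then tuning these values to drive the inefficiency ratio as high as possible, rather than merely above $1$.

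First I would fix a candidate inefficient allocation and compute $SW_{OPT}$ as the maximum-weight matching in the bipartite value graph. Next I would determine the equilibrium allocation and prices by backward induction over the rounds of the draft auction. Because items are drafted one at a time, every subgame reached after removing an allocated bidder--item pair is itself a draft auction, so its equilibrium prices are determined by the ``price-setting'' competitor -- the losing bidder whose continuation value for the contested item is highest -- exactly as in the $(1-2\epsilon,1-2\epsilon,0,0)$ computation of Lemma~\ref{lem:stability}, where $b$ price-sets the first round and $d$ the second. Working from the last round backward yields, for each reachable history, the winner, the item taken, the price paid, and hence each bidder's continuation utility.

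With the continuation utilities in hand, I would verify the equilibrium by checking that no bidder has a profitable one-shot deviation at any history: neither raising one's bid to seize a more valuable item earlier, nor shading one's bid down to lose the current round in exchange for a cheaper item later, improves utility. This reduces to a finite system of inequalities among the chosen values. Finally, I would compute $SW_{EQ}$ for this equilibrium and optimize the numeric parameters so that $SW_{OPT}/SW_{EQ}$ approaches a constant exceeding $1.22$ as $\epsilon\to 0$.

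The hard part will be the equilibrium verification together with the parameter optimization. Unlike the price-of-stability argument, where it sufficed to exhibit one profitable deviation away from the optimal allocation, here I must certify that an entire inefficient allocation is \emph{self-enforcing} across all subgames, which requires carrying the backward-induction price computation through every branch of the game tree and then choosing the values to satisfy all equilibrium inequalities \emph{simultaneously} while maximizing the welfare gap. Pushing the constant up to $1.22$, rather than settling for a bound just above $1$ as in Lemma~\ref{lem:stability}, is precisely what forces the quantitatively tight choice of values.
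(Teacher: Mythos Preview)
Your methodology is correct in outline: one exhibits a concrete unit-demand instance, verifies by backward induction that a particular inefficient allocation is a subgame-perfect equilibrium, and computes the welfare ratio. That is exactly what the paper does. However, your proposal never actually delivers the instance; it only promises to ``tune'' the values of Figure~\ref{fig:inefficient}. For a lower-bound lemma of this kind, the entire content \emph{is} the instance and the verification --- the plan you describe is the easy part, and you yourself identify the equilibrium verification and parameter optimization as ``the hard part'' without carrying it out. As written, there is nothing a reader could check.

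There is also a concrete risk in anchoring on Figure~\ref{fig:inefficient}. That example was engineered under the additional constraint that all bidders share the same ranking of items, which was the point of Lemma~\ref{lem:stability}; this symmetry may well cap the achievable inefficiency below $1.22$, and in any case the $\epsilon$-parameterized structure there was chosen to break optimality, not to maximize the gap. The paper instead uses an unrelated $3\times 3$ instance with integer values
\[
\begin{pmatrix} 32 & 31 & 83\\ 9 & 84 & 97\\ 2 & 42 & 93 \end{pmatrix},
\]
in which the bidders do \emph{not} agree on the ordering of the first two items. The bad equilibrium has $B$ win first (supported at price $51$ by $A$), then $C$, then $A$, yielding allocation $(1,3,2)$ with welfare $171$ versus the optimum $209$, so the ratio is $209/171>1.22$. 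The verification is short, with one subtlety about which bidder price-sets in round one. So the route is the same as yours in spirit, but the paper supplies the specific numbers and the specific price-setting argument that your proposal leaves as a promissory note.
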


\begin{proof}

Consider the matrix

\[ \left( \begin{array}{c|ccc}
A & 32 & 31 & 83 \\
B & 9 &  84&  97\\
C & 2 & 42 & 93 \end{array} \right)\] 

and the strategy profile where $B$ wins first, supported by price
$p_1 = 51$ by agent $A$, then $C$ wins at price $p_2 = 0$, then $A$
wins at price $p_3 = 0$. The allocation will be $(1, 3, 2)$, with
social welfare $97 + 32 + 42 = 171$, while the optimal allocation
$(1,2,3)$ gives social welfare of $32 + 84 + 93 = 209$. Thus, the PoA is
at least $209/171 > 1.22$.

Showing this is a SPE is not difficult: it is necessary, however, for
$A$ to be the price supporter in round 1, rather than $C$ (both have
equal externality for $B$ winning round 1). if $C$ price-sets, $B$
would rather lose round $1$ and the outcome will be efficient.
\end{proof}

The proof of Theorem~\ref{thm:lowerbounds} follows from Lemma~\ref{lem:stability} and Lemma~\ref{lem:lower}.

\subsection{Non-uniqueness of equilibria and Non-dominance of  Revenue}

We show that equilibria are non-unique for draft auctions, even with
unit-demand bidders. Consider the following valuations, where $a,b,c$
are the bidders, and the items are $A^*, B^*$. Suppose $a$ has value
$1$ for $A^*$ and $0$ for $B^*$, and $b$ and $c$ have value $2$ for
either item. Then, there are two pure equilibria: one where $b$ wins
$A^*$ for price $1$ and $c$ wins $B^*$ for $0$, and the other where
$c$ wins $A^*$ for $1$ and $b$ wins $B^*$ for $0$.

This example also shows that item auctions have multiple equilibria
(if $A^*$ is sold first, either $b$ or $c$ winning at price $1$ and
then the other winning the second round at price $0$ is an
equilibrium). If we consider the item auction with the reverse order,
where $B^*$ is sold and then $A^*$, the price in each round will be
$1$, which shows the revenue from the best order is better than the
revenue from draft auctions.

\hide{\subsection{Existence of Pure SPE in the multi-item draft auctions}
\jmcomment{Should we remove this? I am not sure how helpful it is.}
\begin{obs}
  The above proof relies on the single selection process: if a player
  can select more than one item, the set she chooses (and thus her
  externalities for winning, and other's externalities for her
  winning) change as a function of the price at which she wins.
\end{obs}

There is a way to extend this theorem to the case where players can
select multiple items, but it requires more notation and several
definitions to state the generalization formally. 

Notice that, unlike in Observation~\ref{obs:existence}, once a player
wins a single auction, her best-response choice of items $I$ is not
independent of her winning bid $p$. This is because her utility from
winning the auction and selecting the set $I$ will be $v_i(I) - p \mid
I \mid$. Thus, if $p$ is large, $i$ might prefer smaller sets of
items. So, let \dset{i}{p} denote the set of items $i$ should choose to
maximize her utility, given she won at price $p$.

Since the winner's selected items depend on the price at which they
won, so too will the externalities imposed on others. Let $v^i_j(p)$
be the externality player $j$ has for $i$ winning at price $p$, and
let $v^i_i(p)$ be the externality $i$ has for winning at price
$p$. Note that $v^i_j(\dset{i}{p})$ is a function only of the items $i$
chose, not the price she paid for them.

\begin{mdef}
  A \emph{finite partition} demand set $\dset{i}{p}$ is a piecewise
  constant function of the price faced by $i$: there exist a finite
  set of points $\{p_1, p_2, \ldots, p_t\} = P_i\subset [0,v_i([m])]$
  such that $\dset{i}{p}$ is constant between $p_i$ and $p_{i+1}$.
\end{mdef}

\begin{thm}
  Suppose bidders all have finite partition demand sets. Further,
  suppose that, for any two $i$ and $i'$, $P_i \cap P_{i'} =
  \emptyset$. Then, there exists a pure strategy Nash Equilibrium for
  the first-priced auction with externalities that are a function of
  the winner of the auction and the set the winner chooses.
\end{thm}

\begin{proof}
  We will construct a sequence of states, such that the sequence never
  cycles, there are finitely many possible states, and the final state
  in the sequence will be an equilibrium. A state $\state{i}{j}{p}$
  will denote $i$ bidding $p+$, $j$ supporting and bidding at price
  $p$, and all other bidders bidding $0$. Each state in our sequence
  will be such that $v^i_i(p) - p \mid \dset{i}{p} \mid \geq v^j_i(p)$;
  namely, that $i$ would rather win at $p$, select their demanded
  bundle, and pay accordingly rather than let $j$ do the same.

  Assume for simplicity that $v_i \geq 0$. If $v^j_i(0) \geq v^i_i(p)$
  for all $j\neq i$ and for all $i$, then (as in~\cite{Syrgkanis2012}), we
  say the item is toxic, and all bidders bidding 0 and the item being
  given to some player is an equilibrium. Otherwise, there is some
  bidder $1$ who would rather win at price $0$ than let some other
  bidder $2$ win. 

  So, we begin with state $\state{1}{2}{0}$. Then, suppose we are in
  some state $\state{i}{j}{p}$ such that there is no $k$ such that
  $v^k_k(p) - p\mid D(k,p)\mid > v^i_k(p)$. If this is the case, then
  $i$ winning at price $p+$ with $j$ supporting such a price is an
  equilibrium. If, on the other hand, there exists some $k$ such that
  $v^k_k(p) - p\mid D(k,p)\mid > v^i_k(p)$, add the state
  \state{k}{i}{p} to the sequence unless it has already occurred in
  the sequence.

  If \state{k}{i}{p} has already occurred, this implies there is some
  cycle of bidders, such that each one in turn would rather win the
  auction than let the previous bidder win the item. At most one
  bidder on the cycle can have $p\in P_j$. If no bidders on the cycle
  have $p\in P_j$, then any state \state{i_t}{i_{t+1}}{p} would also
  admit \state{i_t}{i_{t+1}}{p+\epsilon}, so long as $\epsilon$ is
  sufficiently small that the demand sets for neither $i_t$ nor
  $i_{t+1}$ change.

  If, on the other had, one player on the cycle $i_t$ has $p\in
  P_j$. Then, winning at some $p+\epsilon$, for sufficiently small
  $\epsilon$, will give $i_t$ utility

\begin{eqnarray}
v^{i_t}_{i_t}(p+\epsilon) &-& (p + \epsilon)\mid D(i, p+\epsilon)\mid \\
\geq v^{i_t}_{i_t}(p) &-& (p + \epsilon)\mid D(i, p)\mid \\
&& \geq v^{i_{t-1}}_{i_t}(p)\\
 & = & v^{i_{t-1}}_{i_t}(p+\epsilon)
\end{eqnarray}

where the first inequality came from the fact that $D(i,p+\epsilon)$
is better than $D(i, p)$ at price $p+\epsilon$, the second comes from
the fact that $v^{i_t}_{i_t}(p) - p\mid D(i, p) \mid > v^{i_{t-1}}_{i_t}(p)
$ and so holds for sufficiently small $\epsilon$.  The final equality
comes from the fact that if $i_t$ changes her demand set at $p$,
$i_{t-1}$ does not. Then, since we assume the externalities of a
player are a function only of what the winning player selects, rather
than the price itself, the externalities on $i_t$ are the same for
$i_{t-1}$ winning at either price.

Thus, the state $\state{i_t}{i_{t-1}}{p + \epsilon}$ has the property
that $v^{i_t}_{i_t}(p+\epsilon) - (p + \epsilon)\mid D(i,
p+\epsilon)\mid \geq v^{i_{t-1}}_{i_t}(p+\epsilon)$. If the cycle has
length more than 2, $i_{t-2}$'s demand set won't change for
sufficiently small $\epsilon$, so $v^{i_{t-1}}_{i_{t-1}}(p+\epsilon) -
(p+\epsilon) \mid \dset{i}{p} \geq v^{i_{t-2}}_{i_{t-1}}(p +
\epsilon)$.

To show there are finitely many states we may process, note that there
will only be finitely many times a cycle will come against a bidder at
a price $p\in P_j$, since prices are increasing and each bidder has
finitely many points in $P_j$.

\end{proof}

Note that this proof does not show that this pure equilibrium survives
iterated removal of dominated strategies; in fact, it may well be the
case that some of the price setters are very slightly overbidding. If
the cycle had length only $2$, $i_{t-1}$ might strictly prefer to lose
to $i_t$ at price $p+\epsilon$ (since $i_t$ takes some different set
at price $p+\epsilon$). On the other hand, $i_t$ would still weakly
like to win over $i_{t-1}$ at this price, and is winning in the state
we construct.}

\newpage
\end{appendix}

\end{document}